\newtheorem{RK}{Remark}   
\newtheorem{define}{Definition}
\newtheorem{prop}{Proposition}
\newtheorem{lem}{Lemma}
\newtheorem{thm}{Theorem}
\newtheorem{cor}{Corollary}
\newcommand{\mc}{\mathcal}
\newcommand{\mb}{\mathbf}
\newcommand{\mbb}{\mathbb}
\title{Distribution System Topology Detection Using Consumer Load and Line Flow Measurements}
\author{Raffi Avo Sevlian and Ram Rajagopal \thanks{R. Sevlian is with the Department of Electrical Engineering and the Stanford Sustainable Systems Lab, Department of Civil and Environmental Engineering, Stanford University, CA, 94305. Email: rsevlian@stanford.edu.}
\thanks{R. Rajagopal is with the Stanford Sustainable Systems Lab, Department of Civil and Environmental Engineering, Stanford University, CA, 94305. R. Rajagopal is supported by the Powell Foundation Fellowship. Email: ramr@stanford.edu.} }
\begin{document}
\maketitle

\begin{abstract}
This work presents a topology detection method combining home smart meter information and sparse line flow measurements.
The problem is formulated as a spanning tree detection problem over a graph given partial nodal and edge flow information in a deterministic and stochastic setting.
In the deterministic case of known nodal power consumption and edge flows we provide sensor placement criterion which guarantees correct identification of all spanning trees.
We then present a detection method which is polynomial in complexity to the size of the graph.
In the stochastic case where loads are given by forecasts derived from delayed smart meter data, we provide a combinatorial Maximum a Posteriori (MAP) detector and a polynomial complexity approximate MAP detector which is shown to work near optimum in low noise regime numerical cases and moderately well in higher noise regime.
\end{abstract}
  IEEE Transactions on Control of Network Systems
\section{Introduction}  

The need for advanced controls in the distribution system is an emerging topic in power system and  controls communities.
Proposed computational models for problems such as dispatching of distributed energy resources \cite{Lavaei2012}, \cite{Lam2011} or coordinated voltage control \cite{Farivar2012}, \cite{Lam2012}, \cite{Jahangiri2013}, \cite{Smith2011} assume known system topology and network parameters.
In reality customer level feeders are not known with an accuracy equivalent to that of the transmission system.

Enabling improved management and control requires significantly improved estimation of the system state.
This can be illustrated in the IEEE 123 Bus System shown in Figure \ref{fig:ieee123-feeder}.
The network not only has end nodes which represent residential transformers (blue rectangle), but various switching devices (green rectangles) and four feeders (red circles).
In this system, estimating the system state requires the determination of the voltage phasor at every node and the status of all discrete devices that can connect and disconnect loads.
In such a setting a Generalized State Estimator (GSE) \cite{Monticelli2000} is used to determine the $\{0, 1\}$ of each discrete device as well as the voltage at each bus.

Some previous work has presented solutions to this issue, which differ from the contributions of this work.
In  \cite{Korres2012}, a traditional GSE is employed to identify the correct topology in a distribution system.
The work presents a traditional weighted least square state estimator and use dummy variables for breaker status indicators, and assume knowledge of the system line parameters.
The measurement types are focused on substation SCADA and load measurements in a very simple network. 
In \cite{Arghandeh2015} the authors introduce the use of high frequency micro-Phasor Measurement Unit ($\mu$PMU) data in the topology detection task.
They propose a method of comparing simulation and measured $\mu$PMU data for each topology.
This work assumes high frequency voltage magnitude and phase measurements are available in each bus.
In \cite{Cavraro2015, Deka2015A} and \cite{Deka2015B} the authors develop a voltage time series approach to identifying topology changes relying on voltage data at each home.
However they rely on long time captures, so their method is more in line of network discovery not real time topology detection.
In \cite{Sharon2012} the authors present a general state estimator based method that is used in topology detection, similar to \cite{Korres2012}.

\begin{figure}[h]
\includegraphics[scale=0.5]{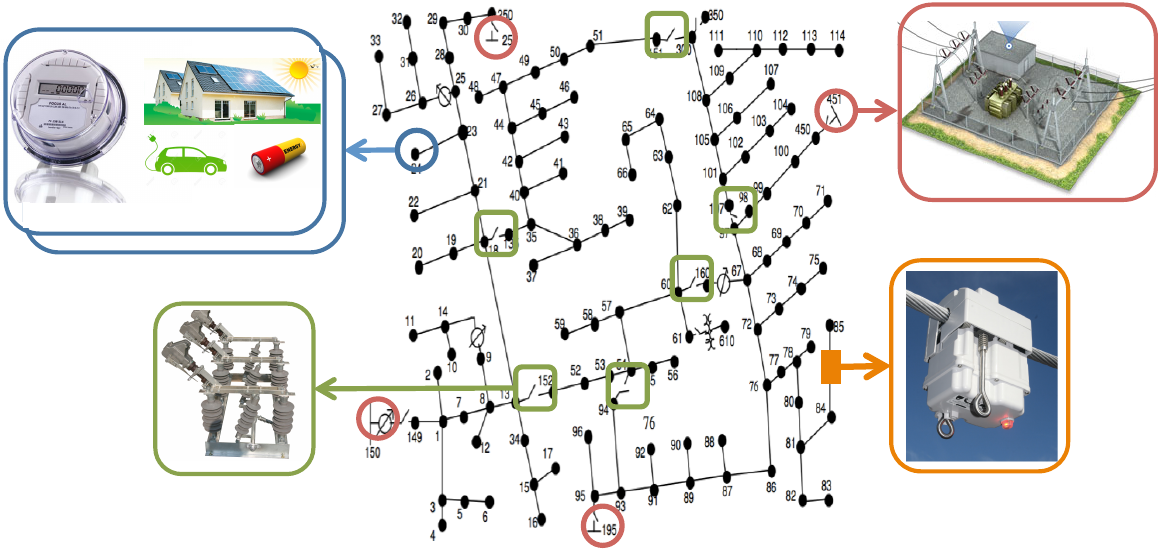}   
\label{fig:ieee123-feeder}
\caption[A set of four subfigures.]{
IEEE 123 Distribution system with commonly occurring sensing and actuating technologies, illustrated in clockwise order: Switching device, smart meters, substations and line sensors.
}
\end{figure}
   
The contributions of this work differs significantly from prior work in the following ways.
Fist, we assume the following information is available: 1) widely available load measurements from smart meters; 2) line flows on a fraction of the lines obtained from either line sensing or substation SCADA.
The line measurements are typically available in real time, while smart metering data is delayed by multiple hours requiring some forecasting if real time topology detection is required.
Second, the detection and sensor placement problems are developed in both the deterministic case of combining historical load and line data and the stochastic case of combining line measurements with load forecasts.
Additionally, our model assumes a lossless network, which although introducing some error is much smaller than the typical load forecast.
We show this solution lends itself well to a very robust data driven approach where many of the line parameters are not known, or when AMI connectivity information may be in error.
This robustness under large uncertainty makes this a very practical and useful method for utilities.

The paper is organized as follows.
Section \ref{section-Problem-Formulation}, \ref{section-System-Model} formulates the problem of topology detection.
Sections \ref{section-deterministic-case} and \ref{section-stochastic-case} solve the detection and sensor placement problems in the deterministic and stochastic cases respectively.
Finally, numerical demonstrations are given in Section \ref{section-Numerical-Experiment}, with additional details in the Appendix.
  
\section{Problem Formulation} 
\label{section-Problem-Formulation}

Consider a power distribution network where multiple feeders can supply energy to all consumers, and must be operated in a radial structure at all times.
In network reconfiguration, sets of breakers and tie switches can reconfigure themselves such that all loads are connected and no feeders are connected.
The task of recovering the network topology is to detect the switch statuses given all available information.

\subsection{DC Power Flow}
\label{subsection-distribution-system-model}

We use a DC power flow approximation to the actual AC flow in the distribution system {\color{black} \cite{Stott2009dc}}.
The model is normally used in approximating the voltage magnitude and phase in the network, but since our detection problem relies on power flows, this is equivalent to using a lossless \textit{network flow} representation.

In a usual representation, the distribution system is modeled as a graph $G(V, E)$ where vertices, $v \in V$ represent nodes (transformers) and the edges $e \in E$ represent the distribution lines.
The signed incidence matrix is $B \in \{-1, 0, +1\}^{|V| \times |E|}$, where each undirected edge has a pre specified direction: $e_k = (v_n, v_m)$ on which to assign columns of $B$ as follows:
\begin{align}
 B_{i, j} = \begin{cases}
   	 	+1 &~\text{if  $v_i$ is the \textit{originating node} of edge $e_j$}  \\
                 -1 &~\text{if $v_i$ is the \textit{terminal node}  of edge $e_j$}  \\ 
                  0 &~\text{else}. \end{cases}
\end{align}
Given the set of net injections in the network, $\mb{y}$ the flow constraints can be represented as: $B \mb{f} =  \mb{y}$.
This can be extended to a complex load case but is out of the scope of this work.
\subsection{Load Model}
\label{subsection-measurement-model}
Each load $v_{n}$ in the system has a consumption $x_{n}$. 
We assume that the loads are single phase real power quantities and the forecast errors are independent random variables: $\epsilon_n\sim N(0, \sigma_{n}^2)$ and $x_{n} \sim N(\hat{x}_{n}, \sigma^{2}_{n})$.   
Given the single global source of energy, we have the following $\mb{y} = [\mb{1}^T\mb{x}~-\mb{x} ]^{T}$.%
\subsection{Switching Model Network Configuration}
\label{subsection-switching-model-network-configuration}
Each switch has a status $w_i \in \{0, 1\}$, and $\mb{w} = \{w_1, \hdots, w_{K} \}$.
The switching is constrained so that all loads must be connected to some feeder and there can exist no path between various feeders.
This ensures that each feeder is connected to some set of loads in a radial configuration, and that no loads are in outage.  
\subsection{Measurement Model}
\label{subsection-Measurement Model}
For any edge $e$ of the original distribution system, we denote by $s$ the power flow on it to all active downstream loads. 
The sensor placement $\mc{M} \subset E$, is a subset of edges of the network.
We assume that the magnitude and direction of power flow is measured.
Additionally, we assume that the power flow measurements are error free.
This assumption can be made since any instrumentation error will be much smaller than the pseudo-measurement errors in practice.

Given a topology defined by $\mb{w}$, the set of all measurements is $\mb{s}$ where the $k^{th}$ is given by
\begin{align}
s_k(\mb{w},~\mb{x}) = \sum_{j: v_j \in V_{k}(\mb{w}) } x_{j}.
\end{align}
The set $V_{k}(\mb{w})$ is the subset of nodes for a particular topology downstream of $k^{th}$ flow measurement under switch state \textbf{w}.

\subsection{Topology Detection}
\label{subsection-Topology Detection}
 
The detection and placement problem is solved in two scenarios: (1) \textit{deterministic case}, where loads and flows are known perfectly, (2) \textit{stochastic case}, where loads are known with uncertainty due to forecasting error.

In the deterministic case, a simple detector will return all topologies which satisfy the load and flow information as follows:
\begin{align}
\hat{\mb{w}} = \{ \mb{w} \in \{0, 1\}^{K} : \mb{s}_{obs} = \mb{s}(\mb{w}, \mb{x}) \}. \label{opt1-combinatorial-detector}
\end{align}
In the stochastic case, a MAP detector can be written as 
\begin{align}
\hat{\mb{w}}  \in \underset{w_i \in \{0, 1 \}^{K}  }{\arg\max} \Pr\left( \mb{w}~|~\mb{s},~\mb{\hat{x}}  \right). \label{eq-switch-MAP-detector} 
\end{align}
  
These naive methods are inefficient and provides no guarantee on unique detection or sensor placement.
For both detector types the following general questions are explored.
\begin{enumerate}
\item \textit{(Correctness)}: How to guarantee that this method will return a unique and correct spanning tree?
\item \textit{(Efficiency)}:     How to search for the correct configuration without evaluating all $2^{K}$ configuration since this can be inefficient?
\item \textit{(Sensor placement)}: Where to place line sensing to minimize missed detections in both deterministic and stochastic settings?
\end{enumerate}

The following sections show how this problem can be reduced to a spanning tree detection problem, and how it can be solved in an efficient matter and provide some guarantees on sensor placement for correct status recovery.

\section{Model Reduction}
\label{section-System-Model}
We show that the general distribution system with switching devices under a lossless power flow can be reduced to an island graph which simplifies the structure of the valid configurations.
The detection problem is then cast as a spanning tree detection problem with nodal and edge measurements on the island graph.
\subsection{Island Graph}
\label{subsection-island-graph}

\begin{figure}[h]
\hspace{-5mm}
\subfigure[][]{ 
\includegraphics[scale=0.37]{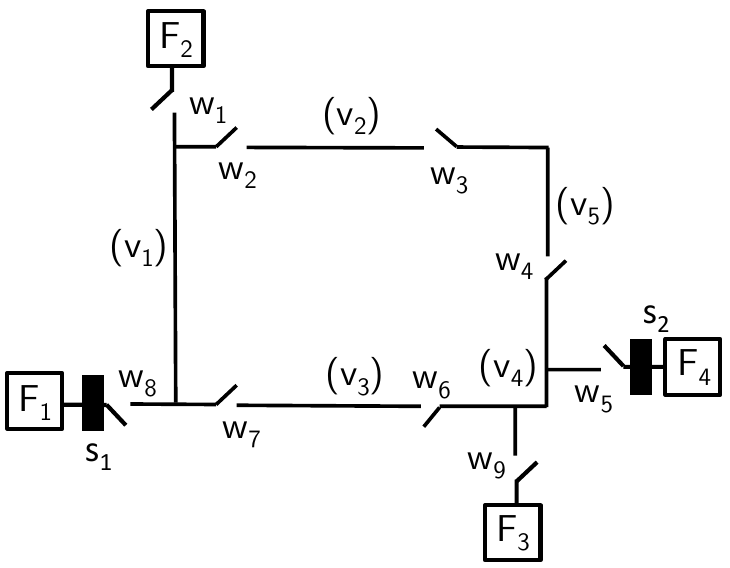}   
\label{fig:reduced_feeder}  
}
\subfigure[][]{ 
\includegraphics[scale=0.37]{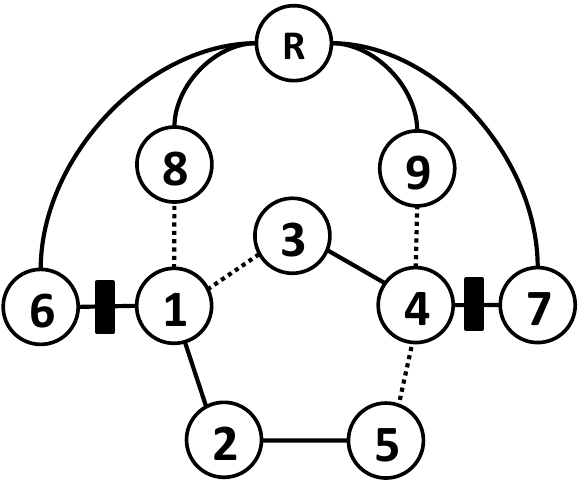}  
\label{fig:island_graph_with_spanning_tree}
}
\caption[  Reduced Feeder and Island Graph  ]{
\subref{fig:reduced_feeder} Typical test feeder with sectionalizing switch operation.
\subref{fig:island_graph_with_spanning_tree} Island graph simplifying topology of feeder.
}
\end{figure}
Consider again the IEEE 123 node feeder in Figure \ref{fig:ieee123-feeder}.
The loads which are connected to each other and separated by switches can be reduced to a set of connected islands separated by various switches.
This reduced representation is shown in Figure \ref{fig:reduced_feeder} where all connected regions are grouped into single lines for visual simplicity.  

For example, sources $150$, $251$, $195$ and $451$ in Figure \ref{fig:ieee123-feeder} are feeders $F_1$, $F_2$, $F_3$, $F_4$ in Figure \ref{fig:reduced_feeder}.
The switch constraints can be seen easily here. 
For example, both $w_1$ and $w_8$ being closed will violate the radial structure of the network since the two feeders will be energizing the same set of loads.
\begin{table}[h]
\centering
\caption{Mapping of vertices and edges to construct island graph.} 
\label{tab:model-comparison}
\begin{tabular}{@{}cccl@{}}
\toprule
\hspace{1mm} Island Graph & & &  IEEE Test Feeder    \\
\cmidrule{1-2}                                     \cmidrule{4-4}
Switch   &                 Node 		& &   Load ID         \\
$w_1$  &   $F_2 ,v_8 - v_1$         	& & $(250 - 251)$  \\ 
$w_2$  &   $v_1 - v_2$         		& & $(18-135)$      \\ 
$w_3$  &   $v_2 - v_3$         	 	& & $(151- 300)$   \\ 
$w_4$  &   $v_4 - v_5$          		& & $(97 -197)$     \\ 
$w_5$  &   $v_4 - F_3, v_7$          	& & $(450 -451)$   \\ 
$w_6$  &   $v_3 - v_4$          		& & $(54 - 94)$      \\ 
$w_7$  &   $v_1 - v_3$          		& & $(13 - 152)$    \\ 
$w_8$  &   $F_1, v_6 - v_1$          	& & $(149 - 150)$  \\ 
$w_9$  &   $F_1, v_6 - F_4, v_9$   & & $(95 - 195)$       \\ 
\bottomrule
\end{tabular}
\label{tab:feeder_to_reduced_island_graph}
\end{table}
  
This representation can be further simplified to an \textit{island graph} using the following steps:
\begin{enumerate}
\item loads from the island graph are converted to vertices in the graph;
\item feeders $F_1 \hdots F_4 $ are turned into vertices;
\item switches in {\color{black}the Island} Graph are converted to undirected edge in the graph;
\item a virtual root node and directed edges $(v_r \rightarrow F_j)$ for all feeders are added.
\end{enumerate}
The result is the island graph in Figure \ref{fig:island_graph_with_spanning_tree}.
A complete mapping between the reduced feeder and the island graph is given in Table \ref{tab:feeder_to_reduced_island_graph}.
The island graph $G = (V, E)$ is the network used in the remaining analysis.   
We denote the added edges $\tau = \{ e \in E: e=(v_r, F_i)~\forall F_i \}$. 
This construction leads to a simple method for enumerating each valid topology of the island graph.

\subsection{Switch Configurations via Island Graph}
\label{subsection-swtich-configurations-via-island-graph}

Consider $\mc{T}$ to be any spanning tree over $G$ and $\mbb{T}$ the set of all spanning trees constructed on $G$.
We refer to the set of spanning trees containing subtree $\tau$ as $\mbb{T}_{\tau}$.
Figure \ref{fig:island_graph_with_spanning_tree} represents an example spanning tree that can be constructed in the island graph $G$.
The following relationship makes our representation useful in the detection task.
\begin{prop}
\label{prop-spanning-tree-switch-config}
The set $\mbb{T}_{\tau}$ represents all valid switch configurations $\mb{w}$ in the reduced network.
\end{prop}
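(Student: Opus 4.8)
The plan is to prove the statement by exhibiting an explicit bijection between valid switch configurations and the spanning trees in $\mathbb{T}_\tau$, and then checking that this correspondence matches, edge for edge, the two operational requirements imposed on a configuration: that every load be energized and that no two feeders be mutually connected. First I would fix a precise notion of a valid configuration. A choice of closed switches is a subset $S \subseteq E \setminus \tau$ of the switch edges, and I would call it \emph{valid} when the subgraph on the feeder and load vertices carrying exactly the edges $S$ is acyclic, every load lies in a component containing a feeder, and no component contains two feeders. Equivalently, $S$ is a spanning forest of $G$ with the root $v_r$ deleted, in which each tree contains exactly one feeder.

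The map I would use sends a spanning tree $\mathcal{T} \in \mathbb{T}_\tau$ to its switch edges $S = \mathcal{T} \setminus \tau$, with candidate inverse $S \mapsto S \cup \tau$. For the forward direction, suppose $\mathcal{T} \in \mathbb{T}_\tau$. Acyclicity of $\mathcal{T}$ descends to $S$. Since $\mathcal{T}$ contains every edge $(v_r, F_i)$ and is acyclic, any switch-edge path joining two feeders $F_i, F_j$ would close the cycle $v_r \to F_i \to \cdots \to F_j \to v_r$; hence no component of $S$ holds two feeders. Finally, connectivity of $\mathcal{T}$ forces each load to have a path to $v_r$, and because the only edges incident to $v_r$ are the $\tau$ edges into feeders, that path reaches $v_r$ through a feeder lying in the load's own $S$-component. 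Thus every component of $S$ carries exactly one feeder, and $S$ is valid.

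For the reverse direction, let $S$ be valid and set $\mathcal{T} = S \cup \tau$. Each component of $S$ is a tree containing a single feeder, and attaching $v_r$ to each such feeder by the one edge $(v_r, F_i)$ glues these trees into a single connected, acyclic graph spanning all of $V$; no new cycle can appear because distinct feeders lie in distinct $S$-components, so each $\tau$ edge links $v_r$ to a previously separate piece. Hence $\mathcal{T}$ is a spanning tree containing $\tau$, i.e.\ $\mathcal{T} \in \mathbb{T}_\tau$. The two maps are mutually inverse by construction, which yields the claimed identification of $\mathbb{T}_\tau$ with the set of all valid configurations.

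The step I expect to be the main obstacle is the clean handling of the ``exactly one feeder per component'' condition, since this is where the prohibition on feeder-to-feeder flow and the requirement that every load be energized must both be read off simultaneously from the tree. The crux is the acyclicity argument in the forward direction: it is precisely the insertion of the \emph{full} subtree $\tau$ (the star from $v_r$ to every feeder) that converts the purely graph-theoretic ``no cycle'' condition into the operational constraint that the feeders remain pairwise disconnected, so once this is established the remaining verifications are routine.
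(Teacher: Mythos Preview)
Your proposal is correct and rests on the same idea as the paper's proof: containment of the star $\tau$ forces any switch-edge path between two feeders to close a cycle through $v_r$, while the spanning property guarantees every load reaches a feeder. Your argument is in fact considerably more complete than the paper's two-sentence sketch, which asserts only the forward direction and leaves the bijection implicit.
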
   

Each switch status can be mapped to some spanning tree in $\mbb{T}_{\tau}$.
We will use $\mc{T}(\mb{w})$ as shorthand for the tree corresponding to $\mb{w}$.
This representation is now used to develop the topology detection problems in both deterministic and stochastic settings.

\subsection{Load and Line measurements in the Island Graph}
\label{subsection-load-and-line-measurements-in-the-island-graph}

The line and  load measurements are analyzed in the island graph as follows:
\begin{itemize}
\item [] \textit{Load Measurements}, where a network flow model is used with no losses, the total consumption in an island is the sum of all nodes in the original graph;
\item [] \textit{Line Measurements} map to edge flow measurements on the island graph.  
\end{itemize}
The second fact is due to the following. 
Measurements can occur in the middle of an island, or at a switch location.
If the measurement is taken at a switch, it corresponds directly with an edge measurements as claimed.
If the measurement is taken inside a load island, we can create a virtual edge in the island graph and add it to $\tau$ so as to restrict $\mbb{T}_{\tau}$.
The results do not change in the stochastic or deterministic cases, but complicate the analysis.
For simplicity, we will only assume line flows are monitored at switches.


\section{Deterministic Case}  
\label{section-deterministic-case}

\subsection{Deterministic Detector}
\label{subsection-deterministic-detector}
Given nodal consumptions $\mb{x}$ and observed flow $\mb{s}_{obs}$, the following program can be used to solve the deterministic detector problem using the island graph formulation:
\begin{align}	
& \text{find}~ \mb{f}, \mb{w}				\label{opt1} \tag{OPT-1}  		\\
& \text{s.t.} 							\nonumber 				\\
& ~~~~~| f_i | \leq |\mb{x}| w_i  				\label{opt1-bound-flows}   		\\ 
& ~~~~~B \mb{f} = \mb{y}        				\label{opt1-network-flow}   	\\ 
& ~~~~~ A_{\mc{M}} \mb{f} = \mb{s}_{obs}        \label{opt1-flow-obs}  		\\
& ~~~~~ \mc{T}(\mb{w}) \in \mbb{T}_{\tau}.       \label{opt1-sp-tree}  		   
\end{align}

\eqref{opt1} is a mixed integer program with boolean $\mb{w} \in \{0, 1\}^{K}$ for the edges in $G$ (switch statuses) and the scalar $\mb{f}$ flow along each edge feeding the loads.
Eq. \eqref{opt1-bound-flows} limits the edge flows to either be set to zero, or be fully unconstrained according to the topology.
Eq. \eqref{opt1-network-flow} is the network flow constraint relating flows to load measurements.
Eq. \eqref{opt1-flow-obs} sets each observed edge to the sensor value while  \eqref{opt1-sp-tree} constrains the status of edges in $G$ to form a spanning tree.
Matrix $A_{\mc{M}}$ indicates the edges that are being measured, so $A_{\mc{M}}(k, m_k) = 1, \forall e_{m_k} \in \mc{M}$.

A naive solution to \eqref{opt1} will enumerate every spanning tree, then given the nodal consumptions, evaluate the theoretical flow value $\mb{s}(\mc{T}, \mb{x})$ and compare it to the observed flow.
At this point, the algorithm complexity is reduced from $2^{K}$ to $O(|\mbb{T}_{\tau}|)$.
\subsection{Spanning Tree Identifiability}
\label{subsection-spanning-tree-identifiability}

The following section provides the conditions in which a naive detection procedure can recover the correct and unique solution.
This corresponds to a line sensor placement which guarantees a unique solution.
First, the following definition is of use.
\begin{define}
The set of spanning trees, $\mbb{T}$, is identifiable if $\forall \mc{T}, \mc{T}^{\prime} \in \mbb{T}$ where $\mc{T} \neq \mc{T}^{\prime}$ we have that $\bf{s}(\mc{T}, \mb{x}) \neq \bf{s}(\mc{T}^{\prime}, \mb{x})$. \label{def-tree-identifiable}
\end{define}

In the deterministic case, we desire a placement $\mc{M}$ such that $\mbb{T}$ is identifiable.
This is referred to as a \textit{valid placement}.
This serves as a baseline to investigate the stochastic case and provides intuition for the problem.
A naive method of evaluating whether the placement is valid is to evaluate $\mbb{T}$, then test whether any two trees in the set evaluate to the same observation.
This naive procedure has $O(|\mbb{T}|^2)$ complexity and provides no insight.
The following theorem provides the necessary and sufficient conditions in which $\mbb{T}$ is identifiable and a placement is valid.

\begin{thm}   
\label{thm:spanning_tree_identifiability}
$\mbb{T}$ is identifiable if and only if the graph $G \setminus \mc{M}$ of the island graph forms a spanning tree.
\end{thm}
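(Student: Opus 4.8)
The plan is to translate the purely combinatorial identifiability condition of Definition \ref{def-tree-identifiable} into a linear-algebraic statement about the cycle space of the island graph, and then read off the stated graph-theoretic criterion. First I would extend every tree flow to the whole edge set by padding with zeros on the open edges, so that for each $\mathcal{T} \in \mathbb{T}$ the full flow vector $\mathbf{f}^{\mathcal{T}}$ satisfies $B\mathbf{f}^{\mathcal{T}} = \mathbf{y}$ through \eqref{eq:network-flow-formulation}, and the observation is $\mathbf{s}(\mathcal{T},\mathbf{x}) = A(\mathcal{M})\mathbf{f}^{\mathcal{T}}$ as in \eqref{Gamma_Incidence_Relations}. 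For two trees this gives $B(\mathbf{f}^{\mathcal{T}} - \mathbf{f}^{\mathcal{T}'}) = 0$, so the flow difference $\mathbf{g} = \mathbf{f}^{\mathcal{T}} - \mathbf{f}^{\mathcal{T}'}$ lives in $\ker B$, the cycle space of $G$. Identifiability is then exactly the statement that no such difference is annihilated by $A(\mathcal{M})$, i.e. that restricting the cycle space to the measured edges $M = E(\mathcal{M})$ separates distinct trees.

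For the ``if'' direction I would assume $E(G)\setminus E(\mathcal{M})$ is a spanning tree $U$ and use its fundamental cycles. Each measured edge $e \in M$ is a co-tree edge of $U$ and generates a fundamental cycle $C_e$ consisting of $e$ together with the unique $U$-path joining its endpoints; the crucial point is that $C_e$ contains \emph{exactly one} measured edge, namely $e$ itself, because every other edge of $C_e$ belongs to the tree $U$. The family $\{C_e\}_{e\in M}$ is a basis of the cycle space, and in this basis $A(\mathcal{M})$ acts as a signed identity: writing $\mathbf{g} = \sum_e \alpha_e C_e$, the $e$-component of $A(\mathcal{M})\mathbf{g}$ is $\pm\alpha_e$. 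Hence $A(\mathcal{M})$ is injective on the cycle space, so any nonzero $\mathbf{g}$ yields a nonzero observation difference. It remains to verify that distinct trees give a nonzero $\mathbf{g}$: taking $e \in \mathcal{T}\setminus\mathcal{T}'$, the flow $f^{\mathcal{T}}_e$ equals the sum of the loads in the component of $\mathcal{T}\setminus\{e\}$ downstream of $e$, while $f^{\mathcal{T}'}_e = 0$, and under the load model of Section \ref{section-System-Model} this downstream sum is generically nonzero. Thus $\mathbf{s}(\mathcal{T},\mathbf{x}) \neq \mathbf{s}(\mathcal{T}',\mathbf{x})$ and $\mathbb{T}$ is identifiable.

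For the ``only if'' direction I would argue the contrapositive and exhibit two indistinguishable trees whenever the complement fails to be a spanning tree. The essential obstruction is a cycle inside the unmeasured edges: if $E(G)\setminus E(\mathcal{M})$ contains a cycle $C$, I would choose a tree $\mathcal{T}\in\mathbb{T}_{\tau}$ using all but one edge $e_0\in C$ and set $\mathcal{T}' = \mathcal{T} - e_1 + e_0$ for a second edge $e_1\in C$, which is again a spanning tree by the exchange property. The resulting flow difference is a circulation supported entirely on $C$, hence on unmeasured edges, so it vanishes on every measured edge and $\mathbf{s}(\mathcal{T},\mathbf{x}) = \mathbf{s}(\mathcal{T}',\mathbf{x})$ even though $\mathcal{T}\neq\mathcal{T}'$. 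Combining this with the injectivity analysis above, identifiability forces $A(\mathcal{M})$ to be injective on the $(|E|-|V|+1)$-dimensional cycle space, which forces the complement to be acyclic and of size $|V|-1$, i.e. a spanning tree.

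The step I expect to be most delicate is the necessity construction. First, I must guarantee that the swapped configurations $\mathcal{T},\mathcal{T}'$ are genuine members of $\mathbb{T}_{\tau}$ — still spanning and still containing the virtual-root subtree $\tau$ — which I would arrange by confining $e_0,e_1$ to the switch edges of $C$ so the fixed root edges are never touched. Second, pinning the complement down to a full spanning tree rather than merely a forest is the subtle point: acyclicity is immediate from the circulation argument, but ruling out an acyclic yet non-spanning complement requires the rank/minimality count tying $|M|$ to the cycle-space dimension, and I would make that counting explicit. Finally, I would state precisely the genericity needed in the sufficiency step, namely that every subtree downstream of a tree edge contains at least one load node, so that under the absolutely continuous load model the downstream sums are almost surely nonzero and a tree's flow support coincides with its edge set; the remaining fundamental-cycle bookkeeping is routine linear algebra over the incidence structure already fixed in \eqref{Gamma_Incidence_Relations}.
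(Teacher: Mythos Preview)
Your approach is correct and takes a genuinely different route from the paper's. The paper argues combinatorially: it introduces a cycle--measurement map $\mathcal{K}(c)$, proves an ``independence'' property for the family $\{\mathcal{K}(c_k)\}_{c_k\in\mathcal{FC}}$, encodes any transition $\mathcal{T}\to\mathcal{T}'$ as a sequence of single-cycle edge exchanges on $\mathcal{FC}(\mathcal{T})$, shows that each such exchange perturbs exactly the sensors in $\mathcal{K}(c)$, and then runs an induction over the number of exchanged cycles to force some $\Delta s_i\neq 0$. You instead work linear-algebraically: embed every tree flow in $\mathbb{R}^{|E|}$, observe that any difference of tree flows lies in $\ker B$, and reduce sufficiency to injectivity of $A(\mathcal{M})$ on the cycle space, which you read off from the fundamental cycle basis of $U=E(G)\setminus E(\mathcal{M})$. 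Your diagonalization of $A(\mathcal{M})$ on $\ker B$ via that basis is the clean linear-algebra analogue of the paper's Lemma~\ref{cycle-sensor-linear-independence}, and it reaches the same conclusion with considerably less scaffolding. Both arguments need the same genericity hypothesis on the loads (no downstream subtree sums to zero); you make it explicit, the paper leaves it implicit in Lemma~\ref{lem:decouple-on-fc} and Remark~\ref{remark-single-independent-sensor}.

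On necessity you and the paper end up in the same place, and you are right to flag it as the delicate step. The circulation-on-an-unmeasured-cycle construction (your edge swap inside $C$; the paper's one-line ``if no sensor exists on $c_i$ \ldots'') establishes only that identifiability forces the complement to be \emph{acyclic}. Neither argument by itself rules out an acyclic but non-spanning complement, and indeed it cannot: enlarging a valid placement preserves identifiability while shrinking the complement below $|V|-1$ edges. Your proposed ``rank/minimality count'' does not close this on its own, since injectivity of $A(\mathcal{M})$ on a $\mu(G)$-dimensional space yields only $|\mathcal{M}|\ge\mu(G)$, not equality. The honest reading of the theorem is as a characterization of \emph{minimum-cardinality} identifying placements; under the hypothesis $|\mathcal{M}|=\mu(G)$, your acyclicity conclusion together with the edge count $|E\setminus\mathcal{M}|=|V|-1$ does force the complement to be a spanning tree, and your plan goes through.
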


The intuition of Theorem \ref{thm:spanning_tree_identifiability} is that to have observability of all spanning trees, we must have a sensor placement such that any cycle that can be constructed on $G$ will have some flow sensor on it. 
This corresponds to the dimension of the cycle space, referred to as the circuit rank $\mu = |E| - |V| + 1$, which is the minimum number of measurements needed to correctly detect all spanning trees on $G$.
This gives us a $O(E)$ verifiable condition to ensure that all spanning trees are identifiable as opposed to $O(|\mbb{T}|^2)$ with the naive method.

Theorem \ref{thm:spanning_tree_identifiability}  provides a way to construct the set of all placements where identifiability is achieved.
First consider the function $h(\mc{T}) = E \setminus \mc{T}$, which returns the edges in $G$ that are not in $\mc{T}$.
These edges are referred to as the co-tree of $\mc{T}$.
An obvious consequence to Theorem \ref{thm:spanning_tree_identifiability} is the following:
\begin{cor}
The function $h: \mc{T} \rightarrow \mc{M}$ is a bijection between the set $\mbb{T}$ and the set of all valid placements $\mbb{M}$.
\label{thm-tree-placement-mapping}
\end{cor}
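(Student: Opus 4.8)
The plan is to recognize $h$ as nothing more than the set-complement operation $S \mapsto E(G) \setminus S$ restricted to the family of spanning trees, and to let Theorem \ref{thm:spanning_tree_identifiability} carry the structural weight. Complementation inside the fixed edge set $E(G)$ is an involution on the power set of $E(G)$, hence automatically a bijection of that power set onto itself; the entire task is therefore to check that this involution carries $\mbb{T}$ exactly onto the set of valid placements $\mbb{M}$, which is precisely the biconditional supplied by the theorem.

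First I would confirm that $h$ lands in $\mbb{M}$. For any $\mc{T} \in \mbb{T}$, put $\mc{M} = h(\mc{T}) = E(G) \setminus E(\mc{T})$; then $E(G) \setminus E(\mc{M}) = E(\mc{T})$, which is a spanning tree by hypothesis, so Theorem \ref{thm:spanning_tree_identifiability} certifies $\mc{M}$ as a valid placement, i.e. $h(\mc{T}) \in \mbb{M}$. Injectivity is immediate from the involution property: if $h(\mc{T}) = h(\mc{T}')$ then $E(G) \setminus E(\mc{T}) = E(G) \setminus E(\mc{T}')$, and complementing inside $E(G)$ yields $E(\mc{T}) = E(\mc{T}')$; since a spanning tree on the fixed vertex set $V$ is determined by its edge set, $\mc{T} = \mc{T}'$. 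For surjectivity, take any valid placement $\mc{M} \in \mbb{M}$. By Theorem \ref{thm:spanning_tree_identifiability}, $\mc{T} := E(G) \setminus E(\mc{M})$ forms a spanning tree, so $\mc{T} \in \mbb{T}$, and because complementation is an involution, $h(\mc{T}) = E(G) \setminus E(\mc{T}) = \mc{M}$. This exhibits a preimage for every element of $\mbb{M}$ and closes the argument.

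I do not expect a genuine obstacle here, and I would say so plainly: all the nontrivial content lives in Theorem \ref{thm:spanning_tree_identifiability}, which provides the equivalence ``the complement of a placement is a spanning tree $\Leftrightarrow$ the placement is valid'' in both directions. The corollary is then only the observation that set-complementation is a self-inverse bijection respecting that equivalence. The one point worth stating explicitly, rather than any real difficulty, is that both the map $\mc{T} \mapsto E(G) \setminus E(\mc{T})$ and its inverse land in the correct sets, which is exactly the two directions of the theorem's biconditional being invoked once each.
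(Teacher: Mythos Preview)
Your proposal is correct and is exactly the argument the paper has in mind: the paper does not actually write out a proof, stating only that the corollary is ``an obvious consequence'' of Theorem~\ref{thm:spanning_tree_identifiability}. You have simply made explicit the two uses of the biconditional (one for each direction of the bijection) together with the trivial observation that edge-complementation in $E(G)$ is an involution, which is precisely what ``obvious'' is gesturing at here.
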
  
The following is useful:   
\begin{RK}
Corollary \ref{thm-tree-placement-mapping} implies that $|\mbb{M}| = |\mbb{T}|$.
\end{RK}

Corollary \ref{thm-tree-placement-mapping} is quite important from a placement perspective since it actually yields a method to generate a valid placement in the deterministic case.
Also, it allows us to enumerate \textit{all} valid placements for a graph.  
This is important when dealing with a stochastic case where sensor placement relies on mostly evaluating each placement in $\mbb{M}$.
In the case of valid placements on the island graph this set is restricted, since having sensors on edges in $\tau$, {\color{black} would have no physical meaning.}
The restricted set is given by 
\begin{align}
\mbb{M}_{\tau} = \{ G \setminus \mc{T} | \mc{T} \in \mbb{T}_{\tau} \} \label{eq:generate-valid-placement-restricted}.
\end{align}

\subsubsection{Spanning Tree Detection Without Flow Direction}
\label{subsubsection-spanning-tree-detection-without-flow-direction}. 

The sensor placement condition in Theorem \ref{thm:spanning_tree_identifiability} assumes that line flow magnitude and direction are known.
This may not be the case in some line sensing situations where only the magnitude is known but not the direction, since the phase difference between voltage and current must be known for this.

The placement $\mc{M} = \mc{M}_1 \cup \mc{M}_2$ is such that $\mc{M}_1$ satisfies the condition in Theorem \ref{thm:spanning_tree_identifiability} and $|\mc{M}_2| \geq 0$.
For the added measurements, we develop sufficient conditions on $\mc{M}_2$ so that $\mbb{T}(G)$ is identifiable.
First, consider the spanning tree $\mc{T} = G \setminus \mc{M}_1$, and the fundamental cycle basis (See Appendix \ref{section-Useful-Graph-Theory-Definitions-and-Results}), $\mc{FC}_{\mc{M}}$. 
Next consider all the discrete path of edges formed in cycle $c_i \in \mc{FC}_{\mc{M}_1}$, which do not belong to any other cycle given as $p_i = c_i \setminus \cup_{j \neq i} c_j$.
The following sufficient condition on added measurements $\mc{M}_1$ leads to spanning tree identifiability.

\begin{thm}     
\label{thm:undirected-flow-placement}
For any $|p_i| \geq 3$, where $m_i \in \mc{M}_1$ is not on an endpoint, an additional measurement $m_i$ is required on some edge in $p_i$.
\end{thm}  

This condition implies that in the worst case, $2 \mu$ flow sensors are required to uniquely distinguish any potential spanning tree.
Since this is a sufficient condition, there can exist many placements $\mc{M}_1$, where many of the $p_i$'s are of length $1$ or $2$.
Therefore, deterministic placement can be performed by computing all $\mc{FC}_{\mc{M}_1}$ and find the placement with smallest number of $|p_i|>2$.

\subsection{Spanning Tree Detection via Relaxed Flow Solution}   
\label{subsection-deterministic-and-placement}

Theorem \ref{thm:spanning_tree_identifiability} provides a condition where a unique solution to \eqref{opt1} can be found but provides no efficient method to find it beyond exhaustive search.

It can be shown that solving a relaxed form of \eqref{opt1}, without {\color{black} Boolean constraints can recover the correct topology:}
\begin{align}
\mb{f}^{\star} = \{\mb{f}:~\text{st.}~B~\mb{f} = \mb{y}~\text{and}~f_{i} = s_{obs, i}~\forall e_i \in \mc{M} \} \label{relaxed-flow-solution}.
\end{align}

The solution to the linear equation over $\mb{f} \in R^{|E|}$ recovers the sparsity pattern in $\mb{f}$ corresponding to a spanning tree without any {\color{black} sparsity-inducing heuristics.}

We can represent the network flow by partitioning the incidence matrix $B$ and flow vector $f$ into observed $(B^r_M, \mb{f}_M)$ and non-observed $(B^r_N, \mb{f}_N)$ components.
Where $B^r_{N}$ and $B^r_{M}$ are the matrices with their first row removed.
This results in the following:
\begin{align}
 \left[ \begin{array}{cc} B^r_{N} & B^r_{M} \\ 0 & I \\  \end{array} \right] \left[ \begin{array}{c} \mb{f}_N \\ \mb{f}_M \\ \end{array}  \right]  = \left[ \begin{array}{c} \mb{x} \\ \mb{s}_{obs} \\ \end{array}  \right].  \label{network-flow-constraint-matrix} 
\end{align}
\begin{lem}
\label{lem-BN-rank-N-1}
For the sensor placement condition in Theorem \ref{thm:spanning_tree_identifiability}, the matrix $B^r_N$ is has $\text{rank}(B^r_N) = N-1$, and is invertible. 
\end{lem}
From Lemma \ref{lem-BN-rank-N-1}, the following can be computed:
\begin{align}  
\mb{f}^{\star}(\mb{x}, \mb{s}_{obs}) &= \left[ \begin{matrix} \mb{f}_{N}(\mb{x}, \mb{s}_{obs}) \\ \mb{f}_{M} \end{matrix} \right]  \\
	 					      &=  \left[ \begin{matrix} (B^{r}_N)^{-1} (\mb{x} - B^{r}_{M} \mb{s}_{obs} ) \\ \mb{s}_{obs} \end{matrix} \right]. \label{f-star-solution}
\end{align}
Next we must show that the solution to this is in fact the correct spanning tree on the graph.
\begin{thm}
If $\mc{M}$ satisfies the condition in Theorem \ref{thm:spanning_tree_identifiability}, the solution vector $\mb{f}^{\star}(\mb{x}, \mb{s}_{obs})$ encodes spanning tree $\mc{T}$. \label{thm-f-star-correct-spanning-tree}
\end{thm}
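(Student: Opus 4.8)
The plan is to show that the relaxed linear problem \eqref{relaxed-flow-solution}, which is solved over the full edge set of $G$ with no tree constraint imposed, nevertheless has a unique solution, and that this unique solution is precisely the flow induced by the true operational tree $\mc{T}$. Once this is established the claim is immediate: a tree flow is by construction zero on every edge outside $\mc{T}$, so its support pattern (equivalently, the induced indicator $\mb{w}$) recovers $\mc{T}$. The argument is therefore a uniqueness-plus-feasibility argument. I would exhibit one feasible flow whose support is a tree, and separately argue that the feasible set of \eqref{relaxed-flow-solution} is a singleton, forcing the two to coincide.

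First I would establish uniqueness. Partitioning $B$ and $\mb{f}$ into observed and non-observed blocks as in \eqref{network-flow-constraint-matrix}, the bottom block pins $\mb{f}_M = \mb{s}$, so the only free variables are $\mb{f}_N$, governed by $B^r_N \mb{f}_N = \mb{x} - B^r_M \mb{s}$. By Lemma \ref{lem-BN-rank-N-1}, the placement condition of Theorem \ref{thm:spanning_tree_identifiability} makes $B^r_N$ a square $(N-1)\times(N-1)$ matrix of full rank $N-1$, hence invertible; this yields the closed form \eqref{f-star-solution} and shows the solution is unique. Next I would verify feasibility of the true tree flow. Let $\mb{f}^{\mc{T}}$ denote the flow that $\mc{T}$ induces for consumption $\mb{x}$, extended by zeros on every edge not in $\mc{T}$. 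By the flow model \eqref{eq:network-flow-formulation} it satisfies $B\mb{f}^{\mc{T}} = \mb{y}$, and by the definition $\mb{s} = \mb{s}(\mc{T},\mb{x})$ its entries on $\mc{M}$ equal $\mb{s}$ exactly, including the value $0$ on any sensed edge absent from $\mc{T}$. Thus $\mb{f}^{\mc{T}}$ is feasible for \eqref{relaxed-flow-solution}, and by uniqueness it must equal the computed vector $\mb{f} = [\,\mb{f}(\mb{x},\mb{s})~~\mb{s}\,]^T$.

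Having identified $\mb{f} = \mb{f}^{\mc{T}}$, the final step reads off the tree: since $\mb{f}^{\mc{T}}$ vanishes on every edge outside $\mc{T}$, the support of the recovered flow lies in $E(\mc{T})$ and the induced $\mb{w}$ encodes $\mc{T}$. The step I expect to be the crux is the conceptual one, namely explaining why a relaxation over the full graph, carrying no sparsity penalty, returns a tree-structured flow; the resolution is that uniqueness does all the work, because Lemma \ref{lem-BN-rank-N-1} collapses the constraint set to a single point in which the genuine tree flow already lives, so no sparsity-inducing heuristic is needed. The one caveat I would flag concerns recovering $\mc{T}$ from the support when a subtree carries no net load: a tree edge whose entire downstream consumption is zero carries zero flow and is then indistinguishable from a removed edge. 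Under the standing assumption that loads are strictly positive, every edge of $\mc{T}$ carries nonzero flow, so the support equals $E(\mc{T})$ and the encoding is exact; I would state this positivity (or genericity) assumption explicitly to close that gap.
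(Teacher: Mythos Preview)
Your argument is correct and in fact cleaner than the paper's. The paper proceeds by contradiction: it supposes the computed flow encodes a different spanning tree $\mc{T}_2$, then uses invertibility of $B^r_N$ to force $\mb{s}(\mc{T}_1,\mb{x})=\mb{s}(\mc{T}_2,\mb{x})$, and finally invokes Theorem~\ref{thm:spanning_tree_identifiability} (identifiability) to obtain a contradiction; it then separately argues that non-tree subgraphs are also excluded by extending Lemma~\ref{lem:decouple-on-fc}. Your route is the direct uniqueness-plus-feasibility argument: Lemma~\ref{lem-BN-rank-N-1} alone collapses the feasible set of \eqref{relaxed-flow-solution} to a single point, and the true tree flow is trivially a member of that set, so the two coincide. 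This bypasses Theorem~\ref{thm:spanning_tree_identifiability} entirely---your proof of Theorem~\ref{thm-f-star-correct-spanning-tree} depends only on Lemma~\ref{lem-BN-rank-N-1}, whereas the paper's depends on the full graph-theoretic machinery behind identifiability. What the paper's route buys is an explicit connection back to the identifiability result, reinforcing that the two theorems are morally the same statement; what yours buys is logical economy and a shorter dependency chain. Your explicit caveat about strictly positive loads (so that no tree edge carries zero flow) is a point the paper does not state but implicitly needs; flagging it is an improvement.
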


The search over the set of spanning trees can be replaced by solving a set of linear equations \eqref{relaxed-flow-solution}.
This reduction is not only useful for a fast deterministic detector, but is used to formulate a flow based approximate ML detector.

\section{Stochastic Case} 
\label{section-stochastic-case}

This section presents the structure of a combinatorial ML detector as well as two approximate ML detection algorithms.

\subsection{MAP Detector Structure }
\label{subsection-MAP Detector Structure }

Given $G$, we can represent the observed flow as a linear function of consumption: 
\begin{align}  
\mb{s}(\mc{T}, \mb{x}) = \Gamma(\mc{T}, \mc{M})\mb{x}
\end{align}  
where
\begin{align}
\Gamma(\mc{T}, \mc{M}) &= A_{\mc{M}} B^{r, -1}_{\mc{T}}  \label{Gamma_Incidence_Relations}.
\end{align}
The subscript $B_{\mc{T}}$ indicates the incidence matrix corresponding to tree $\mc{T}$. 
Shorthand, $\Gamma_i$ denotes $\Gamma(\mc{T}_i, \mc{M})$ given a fixed sensor placement.

Next using the island graph representation, the general MAP detector in  \eqref{eq-switch-MAP-detector} can be evaluated for observed edge flows $\mb{s}_{obs}$, load forecasts $\hat{\mb{x}}$ and candidate spanning tree $\mc{T}$:
\begin{align}
\hat{\mc{T}} &= \underset{ \mc{T} \in \mbb{T}  }{\arg\max} \Pr\left( \mc{T}~|~\mb{s},~\mb{\hat{x}} \right)                                                                                                     	     \label{eq:map-formulation-line1} \\
 		   &=  \underset{ \mc{T} \in \mbb{T} }{\arg\max} \frac{ \Pr\left( \mb{s},~\mb{\hat{x}} ~ | ~ \mc{T} \right) \Pr\left( \mc{T} \right) }{ \Pr\left( \mb{s},~\mb{\hat{x}} \right)  }  \label{eq:map-formulation-line2} \\
    		   &=  \underset{ \mc{T} \in \mbb{T}  }{\arg\max} \Pr\left( \mb{s},~\mb{\hat{x}} ~ | ~ \mc{T} \right) \Pr\left( \mc{T}\right)                                                                          \label{eq:map-formulation-line3} \\
	           &= \underset{  \mc{T} \in \mbb{T} }{\arg\max} \Pr\left( \mb{s}~ | ~\mb{\hat{x}},~ \mc{T}\right)  \Pr\left( \mb{\hat{x}}~|~ \mc{T} \right) \Pr\left(\mc{T} \right)                 \label{eq:map-formulation-line4} \\
	           &= \underset{ \mc{T} \in \mbb{T}  }{\arg\max} \Pr\left( \mb{s}~ | ~\mb{\hat{x}},~ \mc{T} \right)  \Pr\left( \mb{\hat{x}} \right) \Pr\left( \mc{T} \right)                                \label{eq:map-formulation-line5} \\
                   & = \underset{  \mc{T} \in \mbb{T}  }{\arg\max} \Pr\left( \mb{s}~ |~\mb{\hat{x}},~\mc{T} \right).                                                                                                             \label{eq:map-formulation-line6}
\end{align}

Lines \eqref{eq:map-formulation-line1} - \eqref{eq:map-formulation-line4} convert the MAP detector to a likelihood detector with prior weights.
Line \eqref{eq:map-formulation-line5} conditions on the load forecast $\mb{\hat{x}}$.
Since $\mb{\hat{x}}$ does not depend on the outage hypothesis (only $\mb{s}$ does), the term can be removed leading to \eqref{eq:map-formulation-line6}.
Additionally, we assume a uniform prior over all hypotheses, however this does not have to be the case.
Therefore it is equivalent to a maximum likelihood estimate of the observed flow given a hypothesized tree and the load forecasts.

Given the forecasted loads $\hat{\mb{x}}$, the true loads at each node are given as: $\mb{x} \sim N(\hat{\mb{x}}, \sigma^2 I)$.
Therefore, under a particular hypothesized spanning tree $\mc{T}$, the true flow would be distributed as:
\begin{align}
\mb{s}( \mc{T}_i,~\mb{x}) &= \Gamma_i \mb{x} 								\label{eq:stochastic-flow-obs-eq1}  \\
					&=\Gamma_i (\hat{\mb{x}} +  \epsilon )				\label{eq:stochastic-flow-obs-eq2}  \\
					& = \mb{s}(\mc{T}_i, \mb{\hat{x}}) + \epsilon_{s, i}	        \label{eq:stochastic-flow-obs-eq3} \\
					& \sim N( \mb{s}(\mc{T}_i, \mb{\hat{x}}),~\Sigma_{s, i} ).	\label{eq:stochastic-flow-obs-eq4} 
\end{align}   
The observed flow $s_{obs}$ is $\mb{s}(\mc{T}_i, \mb{x})$, since it is the flow from tree $\mc{T}_i$ and true loads $\mb{x}$.
The term $\mb{s}(\mc{T}_i, \mb{\hat{x}})$ in   \eqref{eq:stochastic-flow-obs-eq3} indicates the theoretical flow that should be observed under the forecast of nodal consumption.
The error $\epsilon_{s, i}$ in  \eqref{eq:stochastic-flow-obs-eq3} is a zero mean multivariate Gaussian with covariance matrix 
$\Sigma_{s, i} = \sigma^2 A_{\mc{M}} B^{r, -1}_{\mc{T}} B^{r, -1, T}_{\mc{T}} A^{T}_{\mc{M}}$.
Given the distribution of what the flow should be, once an observation is given, we can perform maximum likelihood detection with
\begin{align}
\mc{T}  &= \underset{ \mc{T} \in \mbb{T}_{\tau} }{\arg\max} \left(  \mb{s}_{obs} - \mb{s}(\mc{T}, \mb{x} )  \right)^{T}  \Sigma^{-1}_{s, i}  \left(  \mb{s}_{obs} - \mb{s}(\mc{T}, \mb{x} )  \right).   \label{eq-combinatorial-MAP-detector} 
\end{align}
This detector, although optimal, requires enumeration of all spanning trees.
Since edge measurements in the island graph map to switches in the original network, the observation array and covariance matrix will be degenerate in that many zero's will be observed.
In such cases, the search space and likelihood function can be pruned and reduced in size.
We now present two approximate algorithms for solving \eqref{eq-combinatorial-MAP-detector}.

\subsection{Cycle Descent Approximate ML Detection}
\label{subsection-cycle-descent-approximate-MAP-Detection}
An approximate ML detector is based on generating single cycle edge exchanges $\Delta e_i = \{e_i \rightarrow e^{\prime}_i \}$ which iteratively maximizes the likelihood of the observations.
For every edge in $e_i$ in the co-tree of the current tree, an edge exchange $e^{\prime}_i$ is chosen to be an edge along the fundamental cycle of $e_i$.
Therefore, at every step an edge is chosen such that it corresponds to a hypothesized flow $\mb{s}(\mc{T}, \mb{x} )$ closer to the observed flow $\mb{s}_{obs}$.
The procedure is presented in Algorithm \ref{alg-cycle-descent-algorithm}.
\begin{algorithm}[h]   
\KwIn{ [1] Observed flows $\mb{s}$. \\
           \hspace{14.5mm} [2] Load Forecast $\hat{\mb{x}}$ and Error Covariance $\Sigma$ \\ 
           \hspace{14.5mm} [3] Graph $G$} 
\KwOut{MAP Detection Hypothesis $\mc{T}$}
\label{alg-cycle-descent-algorithm}
\textbf{// Find feasible start point. }  \\
$\mc{T}~\leftarrow$ {\bf feasible}-{\bf tree}$(\mb{s}_{obs},~\hat{\mb{x}},~G)$ \\
$FC \leftarrow {\bf generate}-{\bf fundamental}-{\bf cycle}\left( \mc{T} \right)$ 

\While { $\Delta \text{loglik} \neq 0$ } { 
	\For { $\mb{c}_k \in FC$ }{
		$\{ \text{loglik}, \mc{T} \} \leftarrow {\bf local}-{\bf update}( \mc{T}, \mb{c}_k ,~\mb{s},~\hat{\mb{x}} )$ \\
		$FC \leftarrow {\bf update}-{\bf cycles}\left( \mc{T}, FC \right)$ 
	}  
}
\caption{Cycle Descent Algorithm.}
\end{algorithm}

The cycle descent algorithm performs the following sub-tasks.

\noindent
\textit{{\bf feasible}-{\bf tree}} -
A feasible starting point $\mc{T}$ is chosen such that $\mbb{I}{ \{ \mb{s}_{obs} \neq 0 \} } =\mbb{I}{ \{ \mb{s}(\mc{T}, \mb{\hat{x}}) \neq 0 \} }$.
Note that if this is not the case, $\text{loglik} = -\infty$ and the procedure will fail.
This is computed with the following procedure.
\begin{enumerate}
\item Edges with measurements are weighted as follows: (1) edges measuring zero are weighted $0$; (2) edges measuring some non zero value are weighted $K$.
\item Remaining edges are assigned a very large weight ($\geq |E|~K$).
\item Maximum weight spanning tree is calculated on the weighted graph.
\end{enumerate}
This procedure will always produce at least one $\mc{T} \in \mbb{T}_{+}$, since we never choose an edge with zero weight.
This starting point may be very far from a optimal value, but will have a finite log likelihood.

\noindent
\textit{{\bf local}-{\bf update}} -
For a particular edge in the co-tree, $e_k \in G \setminus \mc{T}$, we have the fundamental cycle, $c_k$, produced by enumerating the single cycle formed from $\mc{T} + e_k$.
We then evaluated the objective,  \eqref{eq-combinatorial-MAP-detector}, with candidate trees $\mc{T}^{\prime} \leftarrow \mc{T} - e_k + e_j$, where $e_j \in E(c_k)$ and choose the maximum.

\noindent
\textit{{\bf update}-{\bf cycles} } - 
After each edge exchange operation, the cycles must be updated to reflect the exchanged edge.
A queue is maintained for the edge $e_k$ to be processed, where the elements are updated while maintaining the order of operation in $FC$.  
   
\subsubsection{Intuition of Cycle Descent Performance}

We can think of the likelihood function as $\Pr( \mb{s}~|~\mb{\hat{x}},~\mc{T} )$ as a function, $f(e_1, \hdots, e_{\mu})$, of the co-tree edges, where they must satisfy $E \setminus \{e_1, \hdots, e_{\mu} \}$ being a spanning tree.
The cycle descent algorithm assumes at every stage that
\begin{align}
f(e_1, \hdots, e_{\mu}) = \prod_{e_k \in c_k} f_{k}(e_k).
\end{align}
Therefore, taking the greedy choice is optimal.
To see why this is a good approximation, consider Figure \ref{fig:fc_edge_exchange} and Lemma \ref{lem:decouple-on-fc}, where in the noiseless case, only sensors on the fundamental cycle are effected by the candidate edge moving along the cycle.
All the other sensors not on the cycle are fully decoupled.
In general, this decoupling is not necessarily true, but the approximation is close and as will be shown in the numerical simulations, almost all of the spanning trees will have the same performance as the combinatorial method.
%
%
%
\subsection{Flow Based Approximate ML Detector}
\label{subsection-flow-based-APX-MAP-detector}

This section shows how the combinatorial detector can be reformulated in terms of a network flow based mixed integer quadratic program.
An alternative interpretation to this development is a hypothesis testing framework which is discussed in Appendix \ref{subsection-alternative-view-of-FMST}.

The combinatorial ML in \eqref{eq-combinatorial-MAP-detector} can be rewritten in terms of an estimated flow $\mb{f}$ and unknown spanning tree $\mc{T}$ constraint similar to \eqref{opt1}.
Consider the program,
\begin{align}
& \text{min}~ \frac{1}{2} (  B^{r}_{\mc{T}} \mb{f} - \mb{\hat{x}} ) \Sigma^{1}_{\mc{T}}(  B^{r}_{\mc{T}} \mb{f} - \mb{\hat{x}} )  - \frac{1}{2} \ln \left( \det( \Sigma^{2}_{\mc{T} } )\right) 	\label{opt2} \tag{OPT-2}  	\\
& \text{s.t.} 																																\nonumber 			\\
& ~~~~~ A_{M, +} \mb{f} = \mb{s}_{obs,+}      																										\label{opt2-flow-obs}  	\\
& ~~~~~ \mc{T} \in \mbb{T}_{+}        																												\label{opt2-sp-tree}
\end{align}
and the following equivalence:

\begin{thm}
\label{thm-comb-flow-rewrite}
The combinatorial detector in \eqref{eq-combinatorial-MAP-detector} is equivalent to \eqref{opt2}.
\end{thm}

Like \eqref{opt1}, we must solve for an estimated flow $\mb{f}$ and discrete topology $\mc{T}$.
Here, matrices  $A_{M, +}$, $\Sigma^{1}_{\mc{T}}$, and $\Sigma^{2}_{\mc{T} }$ and the search space $\mbb{T}_{+}$ depend on the binary array $\mbb{I}{\{ \mb{s}_{obs} \neq 0 \} }$.
Matrices $\Sigma^{1}_{\mc{T}}$, and $\Sigma^{2}_{ \mc{T} }$ and $B_{\mc{T}}$ depend on the candidate spanning tree.
Eq. \eqref{opt2-flow-obs} is the observation constraints corresponding to the true flow observation.
Eq. \eqref{opt2-sp-tree} constrains the search space to all spanning trees which lead to flow observations which satisfy the $\mbb{I}{\{ \mb{s}_{obs} \neq 0 \} }$ observations.

Intuitively, \eqref{opt1} and \eqref{opt2} are very similar.
However, it is not possible to find a flow satisfying $B_{\mc{T}} \mb{f} = \mb{\hat{x}}$ and $A_{\mc{M}} \mb{f} = \mb{s}_{obs}$, due to the error in the nodal measurements.
A clear alternative is to find a flow and tree which minimizes a distance measure between the predicted nodal measurements $\mb{\hat{x}}$ and $B_{\mc{T}} \mb{f}$.

\eqref{opt2} is still difficult, since we must search over $\mc{T}$. 
This can be approximated and solved in a much easier fashion by the following coordinate descent style solution.
Recall in coordinate descent, a minimization over two sets of variables, for example $x$ and $y$, will be performed once over variable $x$, then over variable $y$.
That is, 
$\{x^{\star}, y^{\star} \} = \arg\min_{ x \in \mb{X}, y \in \mb{Y} }g(x, y)$ 
is approximated by $\{x^{\star} \} = \arg\min_{ x \in \mb{X} }~g(x, y_{0})$ and $\{y^{\star} \} = \arg\min_{ y \in \mb{Y} }g(x^{\star}, y)$.

To perform a coordinate descent optimization over $\mb{f}$ then $\mc{T}$, we perform the following:
\begin{enumerate}
\item  Setting $\mc{T}_{0}$ to the fully connected graph, and solve for the optimal flow, $\mb{f}^{\star}$, to minimize the objective.
\item  Using the solution $\mb{f}^{\star}$, minimize an approximate form of the original objective which leads to an efficient solution.
\end{enumerate}

\subsubsection{Step 1: Solving over flow}

Relaxing the spanning tree constraint makes \eqref{opt2} ill defined since the choice of $\Sigma^{1}_{\mc{T}} $ and $\Sigma^{2}_{\mc{T}} $ is undefined.
Additionally, these terms are difficult to evaluate.
A clear alternative is to just remove these reweighing matrices altogether, and aim to simply solve for the least square error in the following form
\begin{align}
\mc{\hat{T}}& = \arg\min \frac{1}{2} \|  B \mb{f} - \mb{\hat{x}}  \|^2  	\label{opt2-reduced} \tag{OPT-3}  	\\
& \text{s.t.} 											\nonumber 			\\
& ~~~~~ A_{M, +} \mb{f} = \mb{s}_{obs,+}      					\label{opt2-reduced-flow-obs}  	\\
& ~~~~~ \mc{T} \in \mbb{T}_{+}.        						\label{opt2-reduced-sp-tree}
\end{align}

Solving the relaxed objective leads to 
\begin{align}
\mb{f}_{obs} = \left[ \begin{matrix}  B^{r, -1}_{N}( \mb{\hat{x}} - B^{r}_{M}\mb{s}_{obs} ) \\ \mb{s}_{obs} \end{matrix} \right].
\end{align}

This is identical to the matrix inversion based detector, except the forecast $\mb{\hat{x}}$ is used instead of the true value.
This is the so-called 'noisy flow' solution, which is discussed in Appendix \ref{subsection-alternative-view-of-FMST}, and final objective value is $0$ since $B\mb{f}_{obs} = \mb{\hat{x}}$.
\subsubsection{Step 2: Solving over flow}

Now optimizing over the second coordinate, leads to
\begin{align}
\text{OPT}(\mb{f}_{obs}) &= \underset{ \mc{T} \in \mbb{T}^{+} } {\min} \frac{1}{2} \|  B_{\mc{T}} \mb{f}_{obs} - \mb{\hat{x}}  \|^{2} 			\\	    	   
		           		&= \underset{ \mc{T} \in \mbb{T}^{+} } {\min} \frac{1}{2} \|  B_{\mc{T}} \mb{f}_{obs} - B\mb{f}_{obs} + B\mb{f}_{obs} - \mb{\hat{x}}  \|^{2} \\	    	   
		           		&= \underset{ \mc{T} \in \mbb{T}^{+} } {\min} \frac{1}{2} \|  B_{\mc{T}} \mb{f}_{obs} - B\mb{f}_{obs} \|^{2} 		\\	    	   		   
			   		&= \underset{ \mc{T} \in \mbb{T}^{+} } {\min} \frac{1}{2} \|  B_{G \setminus \mc{T} } \mb{f}_{obs}  \|^{2}.   	\label{OPT_F_L4}   	   		   		        
\end{align}

A close approximation to $\text{OPT}(\mb{f}_{obs})$ is the minimum spanning tree solution over the negative absolute weights: $\text{MST}(-|\mb{f}_{obs}|)$.
This is equivalent to finding the tree with the maximum edge weights in $|\mb{f}_{obs}|$
For this, we have the following bounds relating the approximate solution and the desired $\text{OPT}(\mb{f}_{obs})$.

\begin{thm}
\label{thm:MST-OPT-BOUNDS}
For any flow vector $\mb{f}_{obs}$, $\text{OPT}(\mb{f}_{obs}) \leq  \text{MST}(-|\mb{f}_{obs}|^2)$.
\end{thm}

In standard approximate algorithm analysis \cite{Williamson2011}, our approximate technique should bound the optimal solution from above and below by a constant factor which does not depend on the problem instance.
Therefore this is not complete analysis of an approximation algorithm. 
However, it shows why solving a minimum spanning tree over the ``noisy-flows'' leads to a decent approximate solution and leads to the following approximation algorithm to the combinatorial MAP detector.
This is shown experimentally in Section \ref{subsection-approximate-ML-detector}.

The procedure is described fully in Algorithm \ref{alg-noisy-flow}.
\begin{algorithm}[h]   

\KwIn{ [1] Observed flows $\mb{s}_{obs}$. \\
           \hspace{14.5mm} [2] Load Forecast $\hat{\mb{x}}$. \\ 
           \hspace{14.5mm} [3] Graph $G$} 
\KwOut{MAP Detector Output $\mc{T}$}
\label{alg-noisy-flow}
Evaluate the empirical flow $\mb{f}(\hat{\mb{x}}, \mb{s}_{obs} )$ via \eqref{eq:noisy-flow-line1}. \\

Compute the minimum weight spanning tree solution on graph $G$ where edges are weighted with $-|\mb{f}(\hat{\mb{x}}, \mb{s}_{obs} )|$. \\
\caption{Flow Based Approximate ML Detector}
\end{algorithm}

%
%
%

\section{Numerical Experiment}
\label{section-Numerical-Experiment}

This section presents the following analysis: (1) deterministic detector under various situations; (2) stochastic detection problem for the combinatorial and approximate detectors; (3) numerical results in sensor placement in a stochastic case; (4) analysis of the IEEE 123 Test system.

Two error metrics which are used throughout the numerical section are
\begin{itemize}
\item mean missed detection error over all possible spanning trees 
\begin{align}
g_1(\mc{M}) = \sum_{\mc{T} \in \mbb{T} } \Pr( \mc{T} ) \Pr( \hat{\mc{T}} \neq \mc{T} | \mc{T} ; \mc{M} ); \label{eq-mean-error} 
\end{align} 
\item maximum missed detection error over all possible spanning trees
\begin{align}
g_2(\mc{M}) = \max_{ \mc{T} \in \mbb{T} } \Pr( \hat{\mc{T}} \neq \mc{T} | \mc{T} ; \mc{M} ).  \label{eq-max-error}  
\end{align}  
\end{itemize}

\subsection{Deterministic Placement}
\label{subsection-Deterministic-Placement}

We test the placement problem on a set of planar graphs, shown in Figure \ref{fig:test-graphs}.
In both graphs, a single vertex is designated as the source which is as the top most horizontal lines.
Graph $G_1$ has $v_{\text{root}} = v_4$ and $G_2$ has $v_{\text{root}} = v_1$.

\begin{figure}[h]
\centering
\subfigure[][]{ 
	\label{fig:test_graph_1}
	\includegraphics[scale=0.33]{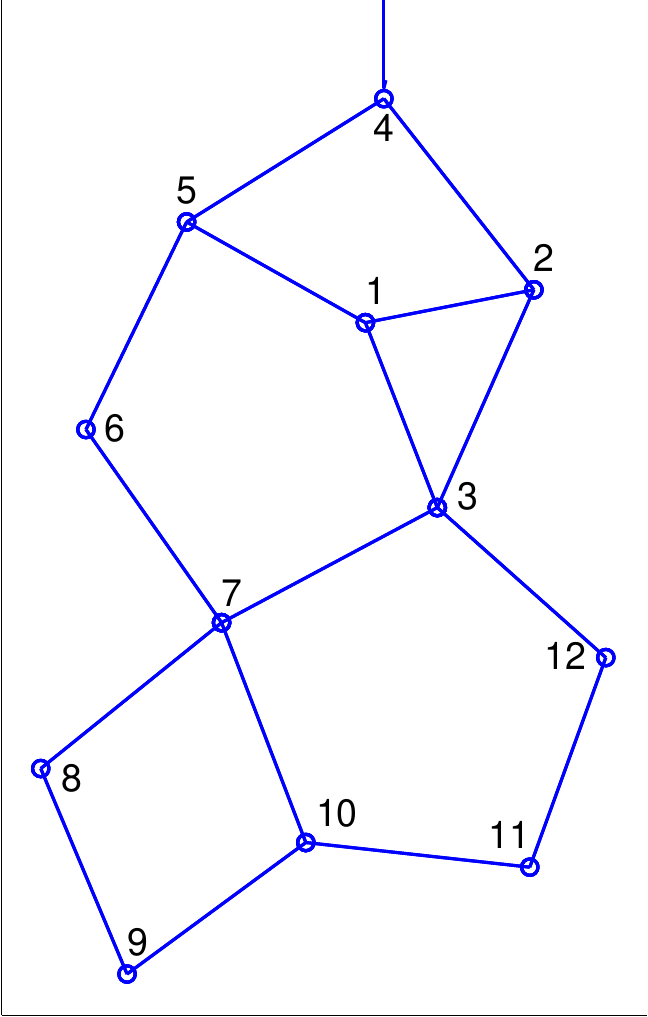}
}
\hspace{-5mm}
\subfigure[][]{     
	\label{fig:test_graph_2}
	\includegraphics[scale=0.25]{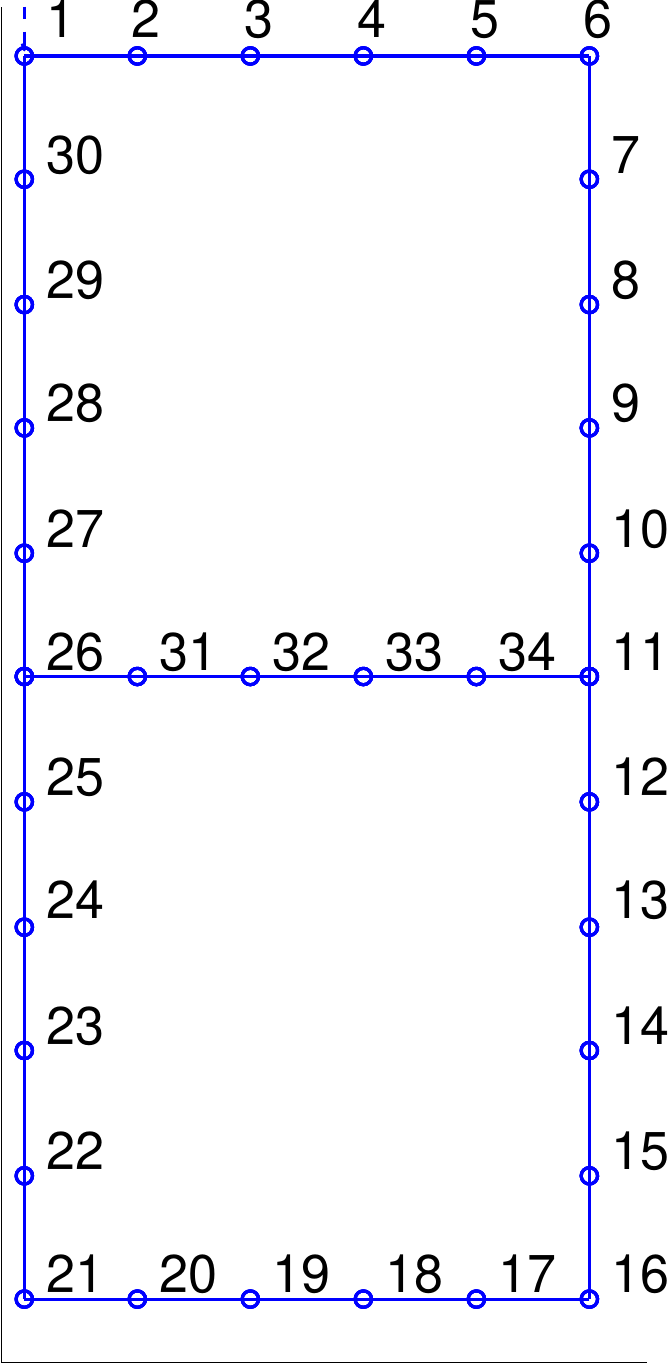}
}
\caption[Test Graph $G_1$ and $G_2$]{
Sample graphs used in various experiments  \subref{fig:test_graph_1} $G_1$ and \subref{fig:test_graph_2} $G_2$.
}
\label{fig:test-graphs}
\end{figure}

\begin{table}[h]
\centering
\caption{Deterministic Topology Detection.}    
\begin{tabular}{@{}cccccccc@{}}
\toprule
  ~            && $\mu$   & $|\mbb{T}|$  &                &    $\epsilon^{\prime}$ &                    & $|E|/\mu$  \\ 
    ~          &&                  &                      &  mean     &              std.              & max/min     &                       \\ 
\cmidrule{1-1}    \cmidrule{3-8}   
$G_1$     &&          5      &       391   	    & $56 .9$   &               $26.7$       &  $299/6$     &  2.8   \\ 
$G_2$     &&          5      &       830         & $139.3$  &               $72.2$       &  $185/10$   &  3.2   \\ 
\bottomrule  
\end{tabular}
\label{tab:deterministic_topology_detection}
\end{table}    

To test the placement problem, we enumerate the set of spanning trees for each of the graphs.
The method relies on the backtracking method developed in \cite{Gabow1978}.
The simulation was implemented in MATLAB and deemed correct by checking that each spanning tree was unique and the number of test trees corresponded to those calculated from the matrix-tree theorem \cite{Diestel2000}.
The theorem allows us to compute the number of unique spanning trees without explicit enumeration.
The number of spanning trees is $|\mbb{T}| = \det(L_v)$ where $L_v$ is the $v$ minor of the Laplacian matrix with the result being invariant to $v$.

For the graphs in Figure \ref{fig:test-graphs}, the graph statistics and experiment results are shown in Table \ref{tab:deterministic_topology_detection}.
We evaluate the experimental error rate
\begin{align}
\epsilon = \frac{1}{N(N-1)} \sum_{i \neq j} \mbb{I}\{  \mb{s}(\mc{\hat{T}}_i, \mb{x}) \neq \mb{s}(\mc{T}_i, \mb{x})  \}.
\end{align}
From Theorem \ref{thm:spanning_tree_identifiability}, the missed detection error must be zero. 
The computed $\epsilon$ was zero in both cases, as was expected.

We evaluate the output according to an arbitrary input because we would like to compare it to the case where only magnitude and not direction is measured.
This is a common type of power system measurement as discussed in Section \ref{subsubsection-spanning-tree-detection-without-flow-direction}.
In this case, we evaluate $\epsilon^{\prime}$ which now compares $|\mb{s}(\mc{\hat{T}}_i, \mb{x})| \neq |\mb{s}(\mc{T}_i, \mb{x})|$ instead.
The computed values for $\epsilon^{\prime}$ are shown in Table \ref{tab:deterministic_topology_detection}.  
We evaluate each valid placement in $\mbb{M}$ to illustrate the importance of flow direction.
The value reported in Table \ref{tab:deterministic_topology_detection} is the mean missed detection error $\pm$ the standard deviation.
This verifies that different placements result in different unsigned missed detection rates.
We see that if the direction of flow is not known, around $10\%$ of the spanning trees are indistinguishable on average.

\subsection{MAP Detection Performance}
\label{subsection-MAP-Detection-Performance}
  
\begin{figure}[h]
\hspace{-5mm}
\centering
\subfigure[][]{ 
	\includegraphics[scale=0.25]{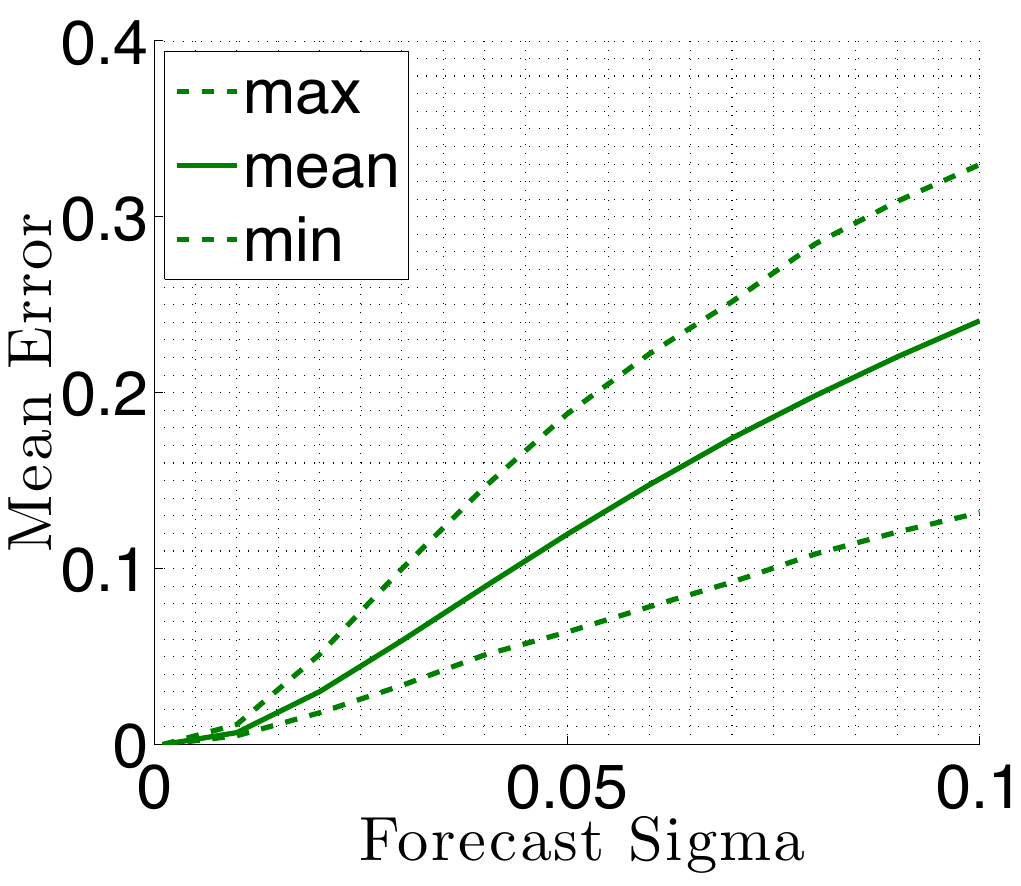}
	\label{fig:MAP-detector-performance-G2}
}
\hspace{-5mm}
\subfigure[][]{   
	\includegraphics[scale=0.25]{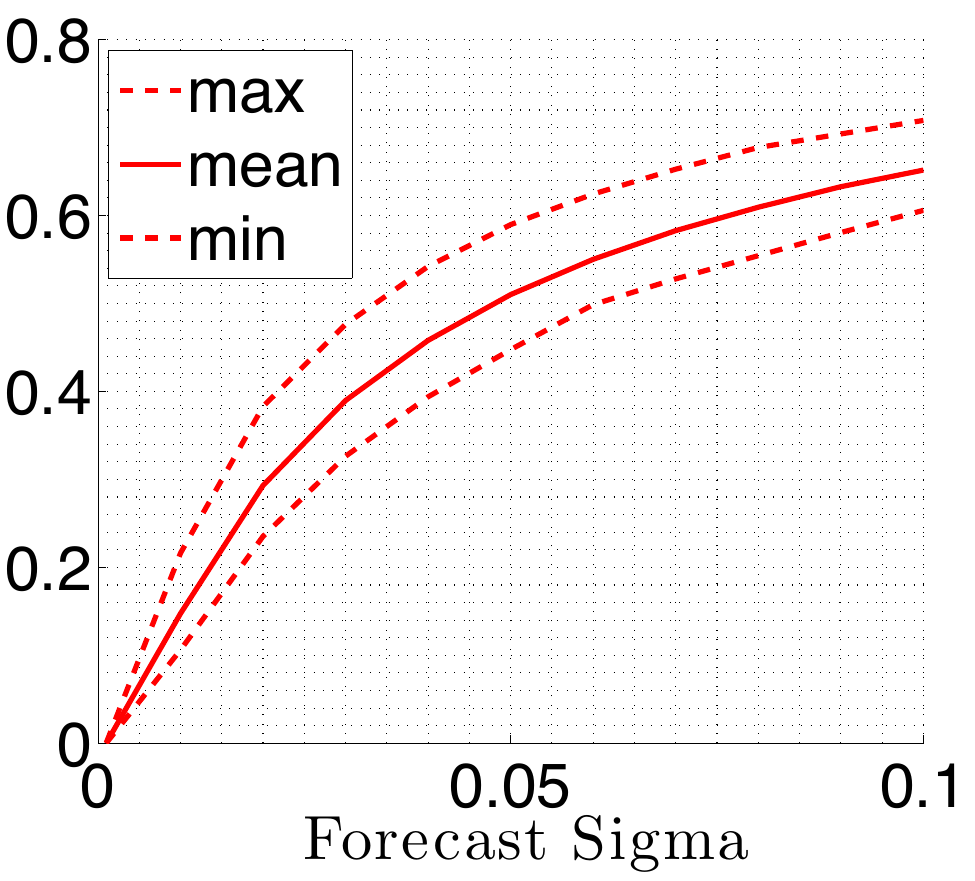}
	\label{fig:MAP-detector-performance-G3}
}
\caption[MAP Detector Performance]{  
Mean missed detection error for the graphs G1 (\subref{fig:MAP-detector-performance-G2}), G2 (\subref{fig:MAP-detector-performance-G3} with respect to $\sigma$.
}
\end{figure}

The performance of the ML detector is evaluated for each of the graphs in Figure  \ref{fig:MAP-detector-performance-G2}, \ref{fig:MAP-detector-performance-G3}.
The one shot detector performance is evaluated with a uniform load mean of $\mu_i = 1$ and forecast error of $\sigma$.
The figures show the mean missed detection error over all hypotheses with respect to $\sigma$.
  
A number of important observations can be seen from this analysis.
Different graphs experience widely different behavior.  
For example $G_1$ has many very short cycles where for any given spanning tree, multiples sensors will see multiple zeros while $G_2$ has only two cycles with high edge count per cycle.
The observations of zeros limits the number of candidate spanning trees that must be considered for the detector thereby pruning out many candidates.

The sensor placement has a dramatic impact on the mean missed detection error.
This is slightly counterintuitive, since a single placement which maps to a single spanning tree must correctly decode all spanning trees with low error.
A symmetry between placement and tree's would make one suspect that the missed detection error should not depend on any single placement.
Within a graph, it is observed that the the mean length of all fundamental cycles associated with a placement is slightly correlated with the mean error.
   
\subsection{Approximate ML Detector}   
\label{subsection-approximate-ML-detector}

\begin{figure}[h]
\hspace{-5mm}
\centering
\subfigure[][]{ 
\label{fig:APX-MAP-performance-G2}
\includegraphics[scale=0.25]{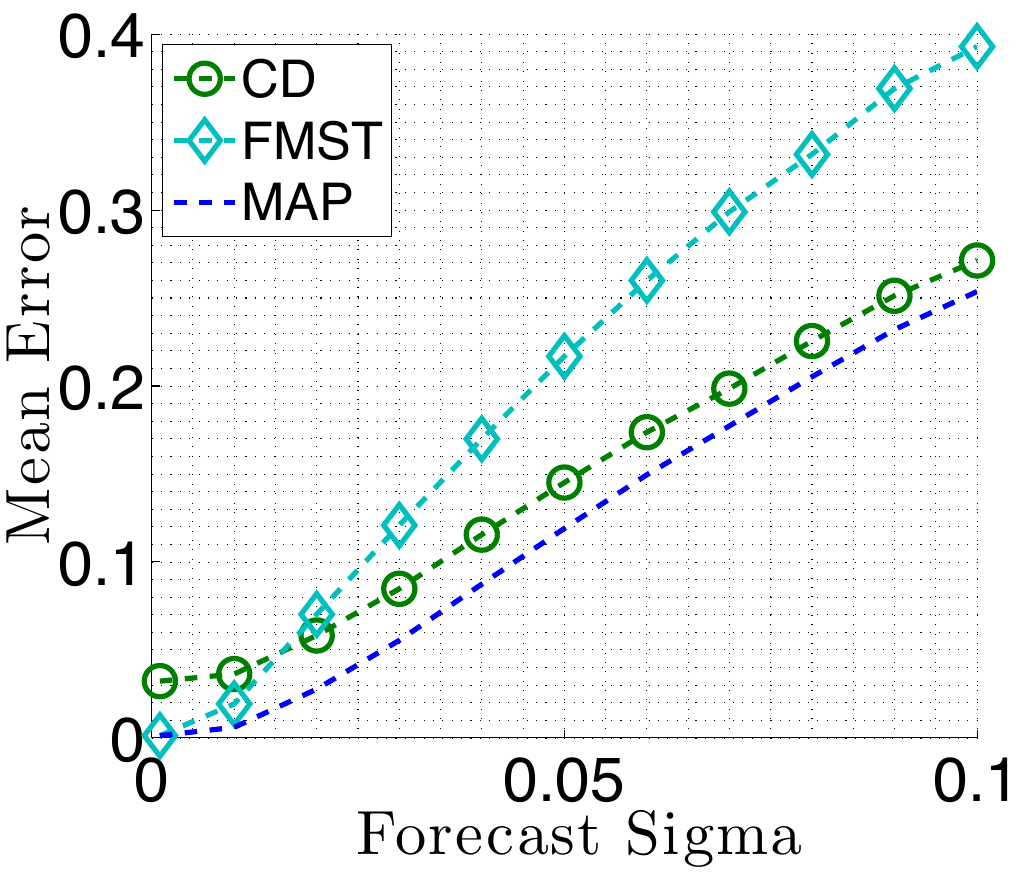}
}
\hspace{-5mm}
\subfigure[][]{   
\label{fig:APX-MAP-performance-G3}
\includegraphics[scale=0.25]{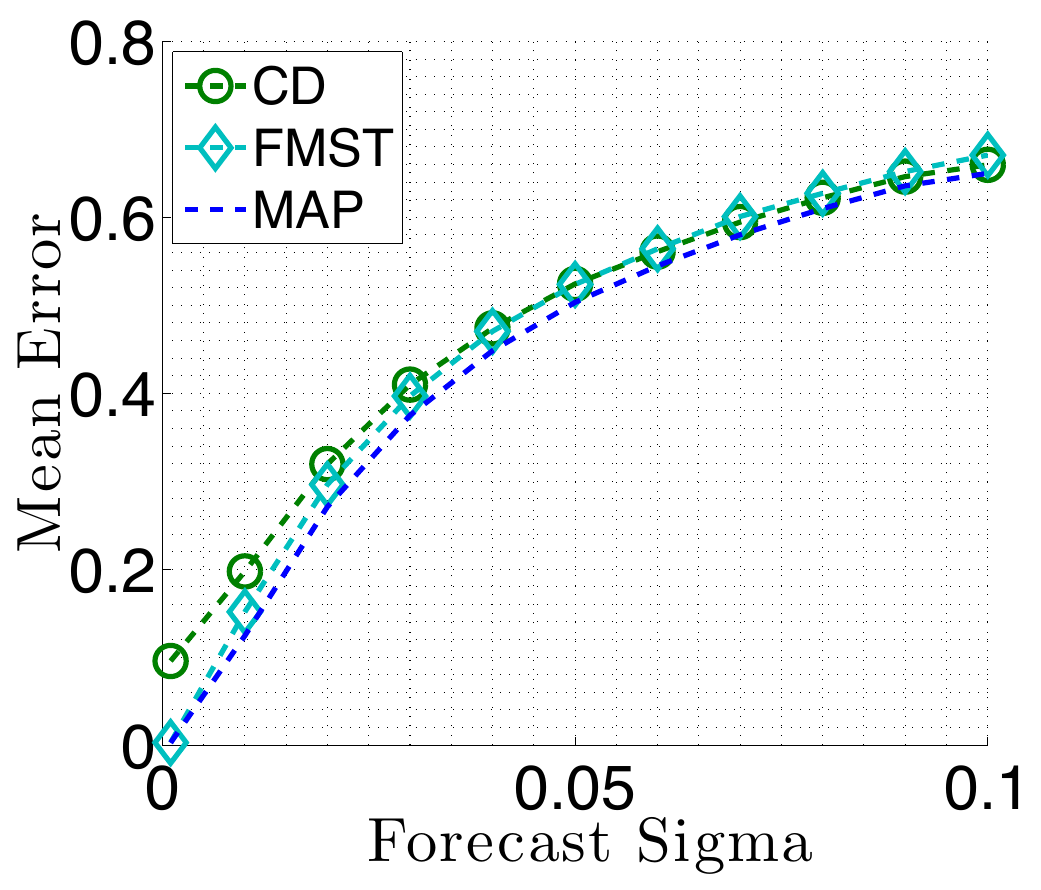}
}\caption[Approximate Detector Performance]{  
Comparison of mean missed detection error for optimal MAP detector, flow based approximate detector (FMST) and cycle descent detector (CD) for the graphs $G_1$ and $G_2$ with respect to $\sigma$.
}
\label{fig:MAP-detector-performance}
\end{figure}  

The two approximate MAP algorithms are tested on $G_1$ and $G_2$, where the performance are shown in Figure \ref{fig:APX-MAP-performance-G2}-\ref{fig:APX-MAP-performance-G3} respectively.

\subsubsection{Flow-Maximum Weighted Spanning Tree (FMST)}
In the near noiseless case, the flow based approximate ML detector performs identically to the combinatorial map detector, in both $G_1$ and $G_2$.
This is because the 'noisy-flow' values are very close to their correct values of zero.
In the high noise case, the algorithm fails worst in the high cycle count graph $G_1$, where the maximum spanning tree graphs very rarely match with the maximum likelihood output.
In $G_2$, however the two are nearly identical.

\subsubsection{Cycle Descent Algorithm}
The cycle descent algorithm has a very different performance than the FSMT algorithm.
For both graphs, the performance is similar.
For a small subset of trees in $G_1$, the algorithm always fail regardless of SNR.
For the remaining spanning trees, the algorithm converges to the optimal detector output.
In the case of $G_1$, only $4\%$ of the trees lead to a failure of the algorithm, for the used placement.

In simulation, it is verified that for the $4\%$ of cases which failed, the detector output corresponded to a tree which mapped to a different cycle basis as that of the correct tree.
Therefore, if the greedy algorithm finds a tree within the same cycle basis, it will find the correct solution (or combinatorial ML solution).
We suspect that this can avenue of investigation can lead to sub $O(|\mbb{T}|)$ optimal MAP detector, instead of an approximate technique.

\subsection{Sensor Placement in Stochastic Case}   
\label{subsection-Stochastic-Sensor-Placement}

\begin{figure}[h]
\hspace{-4mm}
\subfigure[][]{ 
	\label{fig:greedy_max_error_performance}
	\includegraphics[scale=0.22]{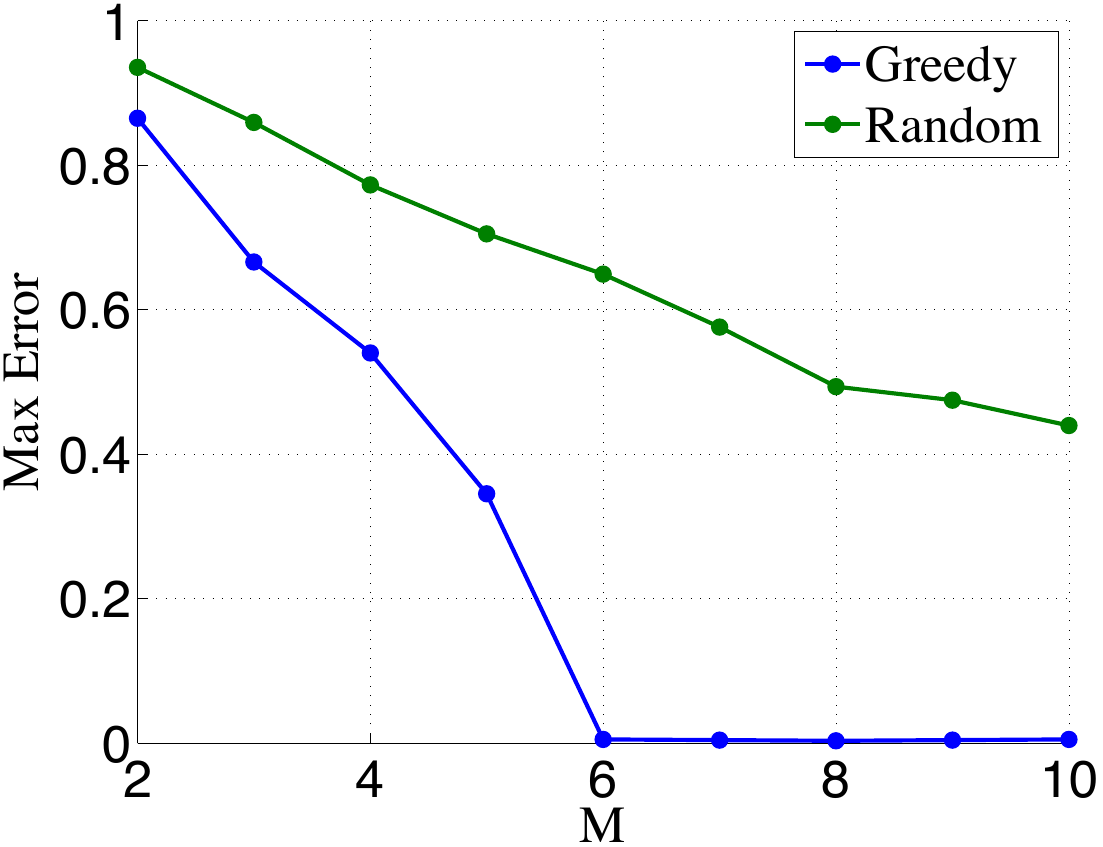}
}
\subfigure[][]{ 
	\label{fig:greedy_mean_error_performance}
	\includegraphics[scale=0.22]{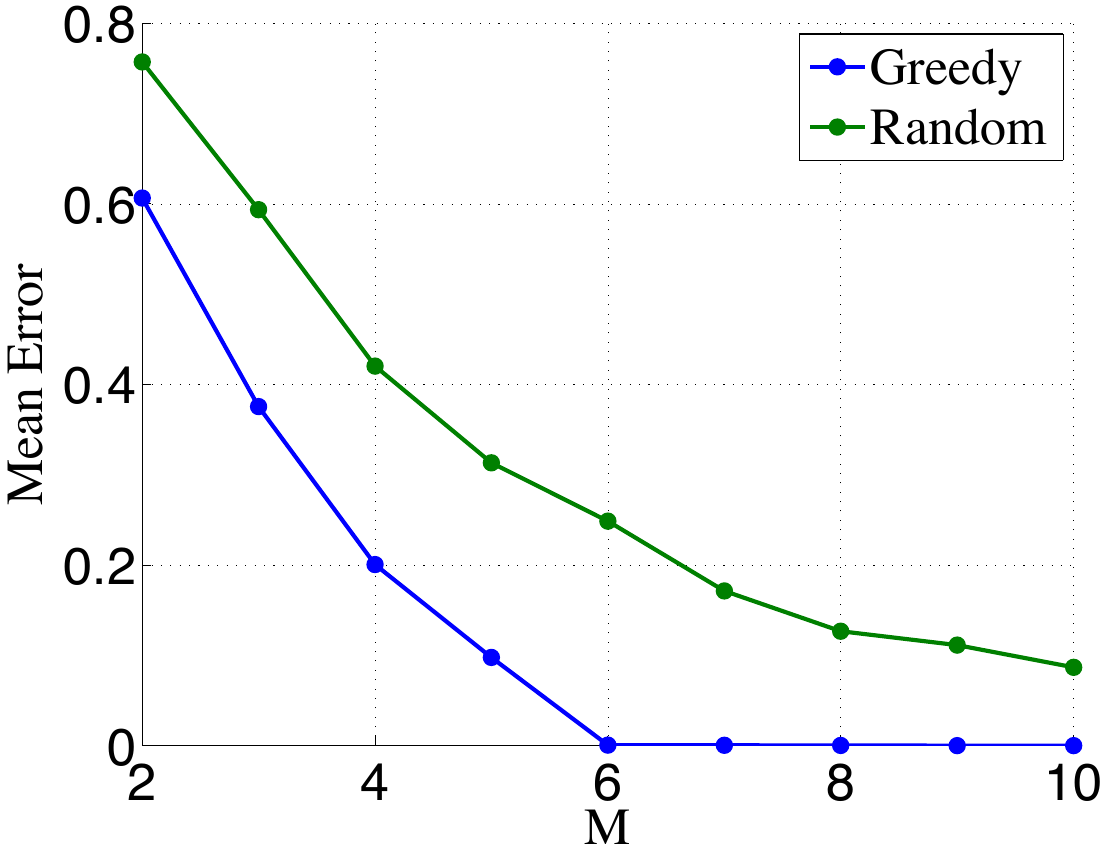}
}
\caption[Greedy Placement Performance]{
Performance of the greedy placement in Algorithm \ref{algorithm-greedy-sensor-placement} is shown for $G_3$. 
With the max-error objective \subref{fig:greedy_max_error_performance} and mean error \subref{fig:greedy_mean_error_performance}.
A random placement is shown for each $|\mc{M}|$ as comparison, with an evaluation size of $100$ placements.
The mean performance over the 100 samples is indicated.}
\label{fig:greedy_max_mean_error_G3}
\end{figure}

\begin{algorithm}[h]
\KwIn{[1] Graph $G$ \\ \hspace{10mm}
          [2] Nominal Load Statistics $\mathbf{L}$, $\mathbf{\Sigma}$ \\ \hspace{10mm}
          [3] Maximum Sensors M} 
\KwOut{Greedy Placement $\mc{M}^{g}$}
$\mc{M}^{g} \leftarrow \underset{ \mc{M} \in \mbb{M} }{\arg\min} ~~  g_{i}(\mc{M}) $ \label{placement-initialize} \\
\While { $| \mc{M}^{g} | \leq M$ } {
	$e^{\star} \leftarrow \underset{ e \in E \setminus \mc{M}^{g} }{\arg\min}~~g_{i}( \mc{M}^{g} \cup e )$ \label{placement-greedy-choice} \\
	$\mc{M}^{g} \leftarrow \mc{M}^{g} \cup \{ e^{\star} \}$ \label{placement-append-best-choice}
}
\label{algorithm-greedy-sensor-placement} 
\caption{Greedy Sensor Placement}
\end{algorithm}

The algorithm is tested on Graph $G_1$ with each node having an identical $\mu_i = 1$, $\sigma = 0.1$.
For comparison, we evaluate a maximum of 100 randomly allocated placements for each size: $\min\left(100,  {|E| \choose |\mc{M}|} \right)$.
The performance is indicated in Figure \ref{fig:greedy_max_mean_error_G3} for both metrics.
{\color{black} The graphs indicate a clear improvement as opposed to a randomized placement.}
For the mean error metric, $|\mc{M}^{g}| \geq 6$ has an error rate less than $0.005$ which is a sensor density of $17 \%$.
A randomized method has much poorer performance on average.
The results are much worse in the max error case, which is expected.
For very large sensor densities, the maximum error is still quite high.
For the max error metric, $|\mc{M}^{g}| \geq 6$ has an error rate less than $0.005$ which is a sensor density of $17 \%$.

\subsection{Objective Submodularity Counterexample}
\label{subsection-Submodularity-Counterexample}

Sub modularity is a property commonly exploited in many combinatorial optimization problems (see \cite{nemhauser1988}) for more details.
It is useful since it guarantees that a greedy algorithm is within a factor of $(1 - \frac{1}{e})$ of the optimal value.
\begin{define} 
For every $\mc{M} \subset \mc{M}^{\prime} \subset E$ we have $\forall e \in E \setminus B$
$g( \mc{M} \cup \{e\} ) - g(\mc{M})$ $\leq$ $g( \mc{M}^{\prime} \cup \{e\} ) - g(\mc{M}^{\prime})$.
\end{define}

The \textit{relative decrease in the objective function must be larger for the smaller set} under all subsets $\mc{M}$, $\mc{M}^{\prime}$ and additional element $e$.
Section \ref{subsection-Submodularity-Counterexample}, presents a numerical counterexample.

\begin{figure}[h]
\centering
\subfigure[][]{ 
	\includegraphics[scale=0.24]{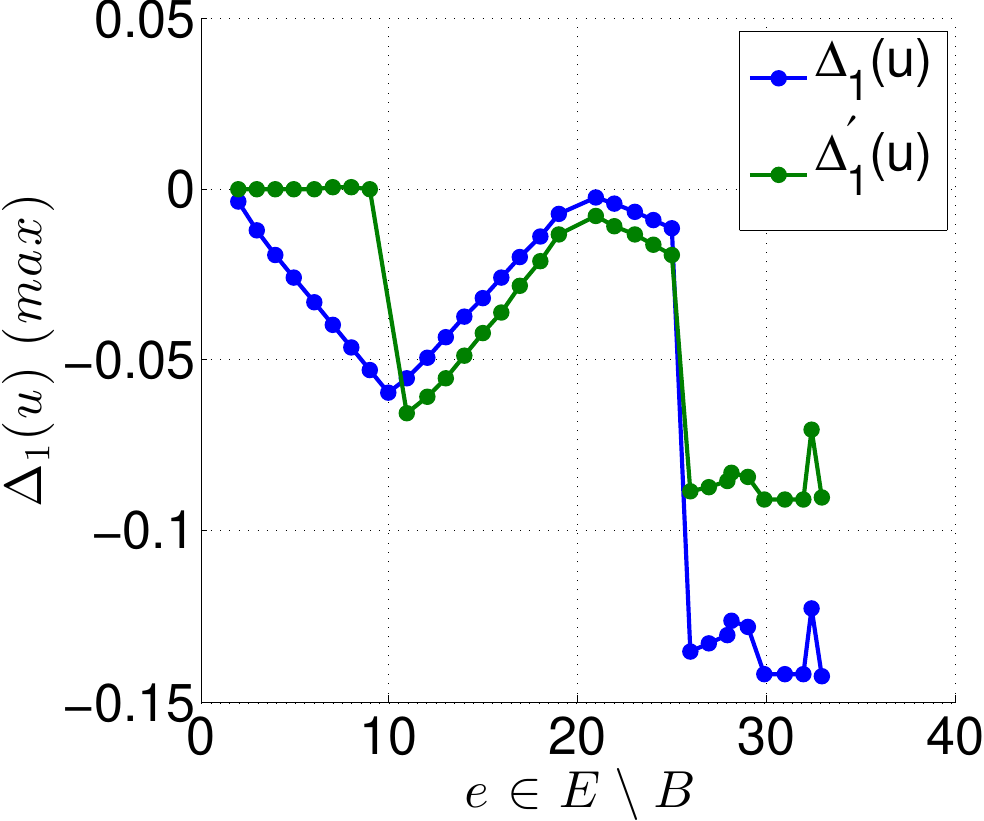}
	\label{fig:test_mean-submodularity-counterexample}  
}
\subfigure[][]{ 
	\includegraphics[scale=0.24]{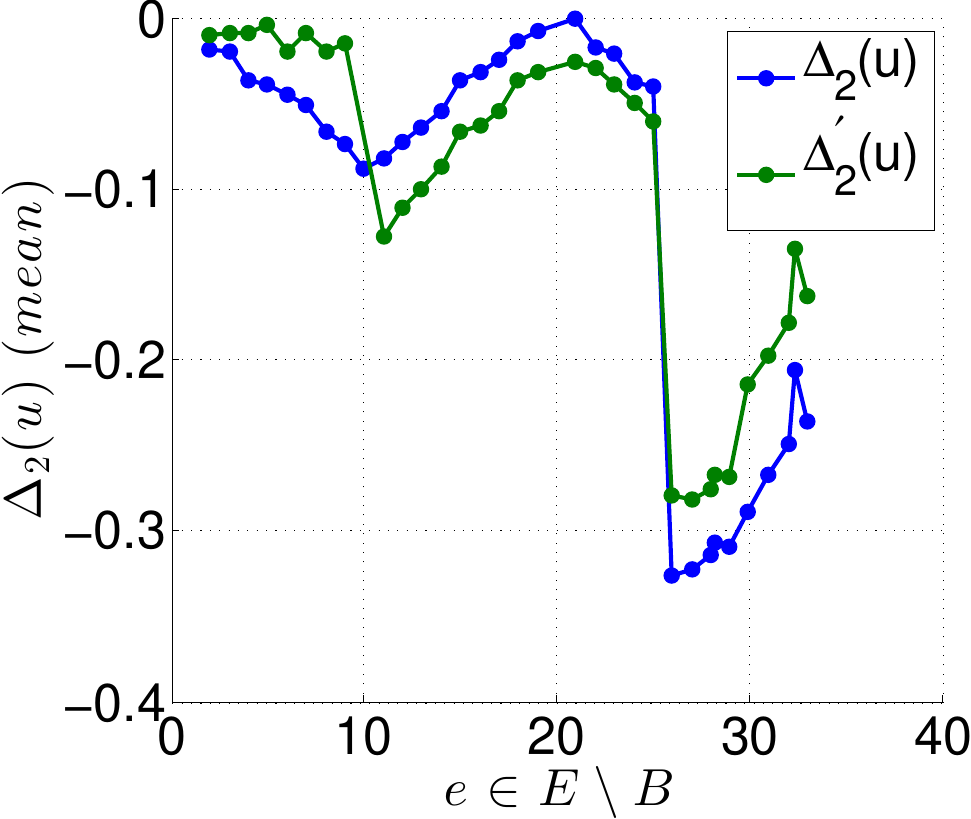}   	
	\label{fig:test_max-submodularity-counterexample}
}
\caption[Sub-modularity Counter Example]{
Counterexample for super modularity.  
For the objective to be super modular, $\Delta_{k}(e) (u) \leq \Delta^{\prime}_{k}(u)$ $\forall u \in E\setminus B$.  
The numerical experiment shows that this is not the case, from observing the two plots crossing in multiple points.  
}
\end{figure}

The sample graph $G_2$ is used to show a computable counterexample to sub modularity.
Consider the following sets $\mc{M}= \{(1,2), (20,21)\}$ and $\mc{M}^{\prime} = \{(1,2), (20,21), (10,11)\}$.
For the remaining allowable edge $e \in E \setminus \mc{M}^{\prime}$, we can compute the discrete dervative for each additional measurement where 
\begin{align}
\Delta_{k}(e) &= g_k( \mc{M} \cup \{e\} ) - g_k(\mc{M}) \\
\Delta^{\prime}_{k}(e) &= g_k( \mc{M}^{\prime} \cup \{e\} ) - g_k(\mc{M}^{\prime}).
\end{align}

For super modularity to hold, we must have $\Delta_{k}(e) \leq \Delta^{\prime}_{k}(e)$ for all $e$.
However, as the example shows, in a certain set of $u$ we have that $\Delta^{\prime}_{k}(e) < \Delta_{k}(e)$ for $k = 1,2$.

\subsection{123 Test Feeder and Robustness of Power Flow Measurements}
\label{subsection-123-Test-Feeder}

\begin{figure}[h]   
\hspace{-5mm}
\subfigure[][]{
\includegraphics[width=0.23\textwidth, height=0.23\textwidth]{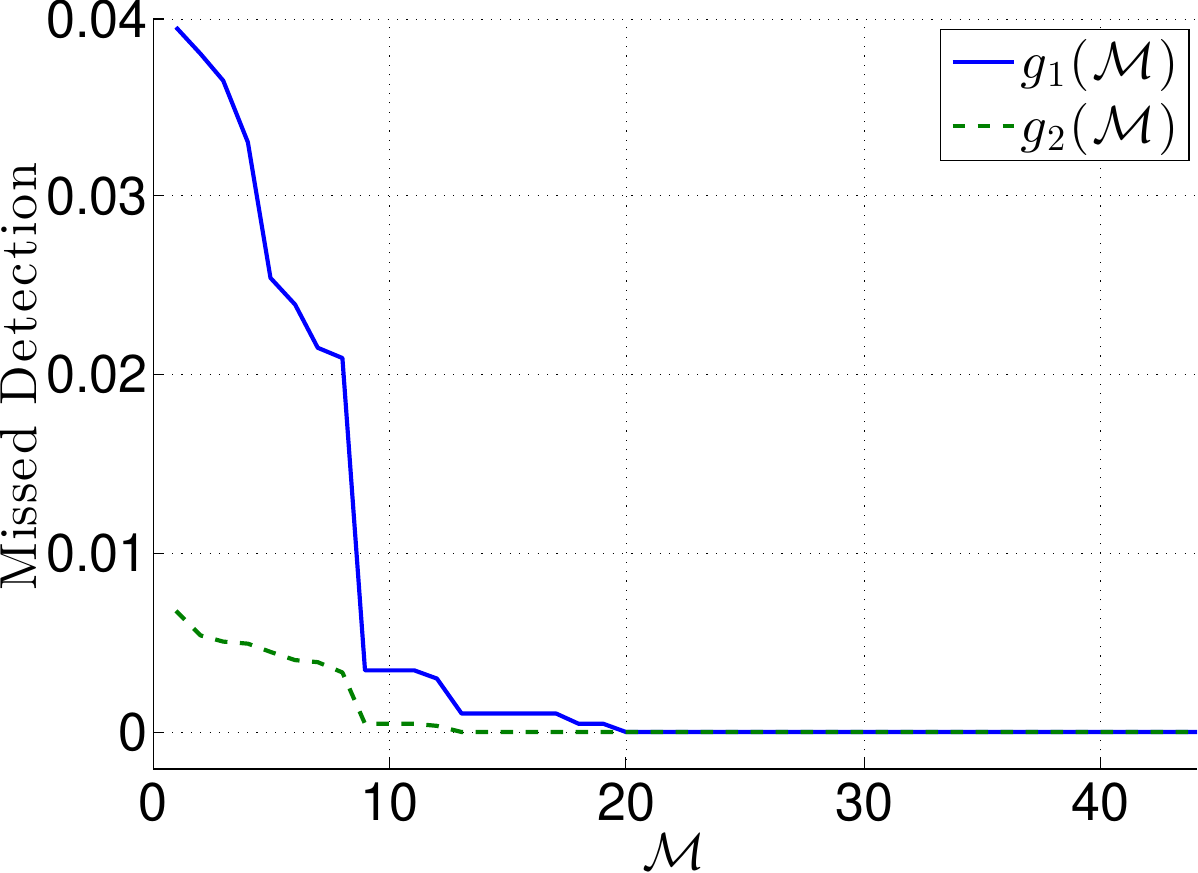}
\label{fig:IEEE123_placement}  
}
\hspace{-5mm}
\subfigure[][]{
\includegraphics[width=0.23\textwidth, height=0.23\textwidth]{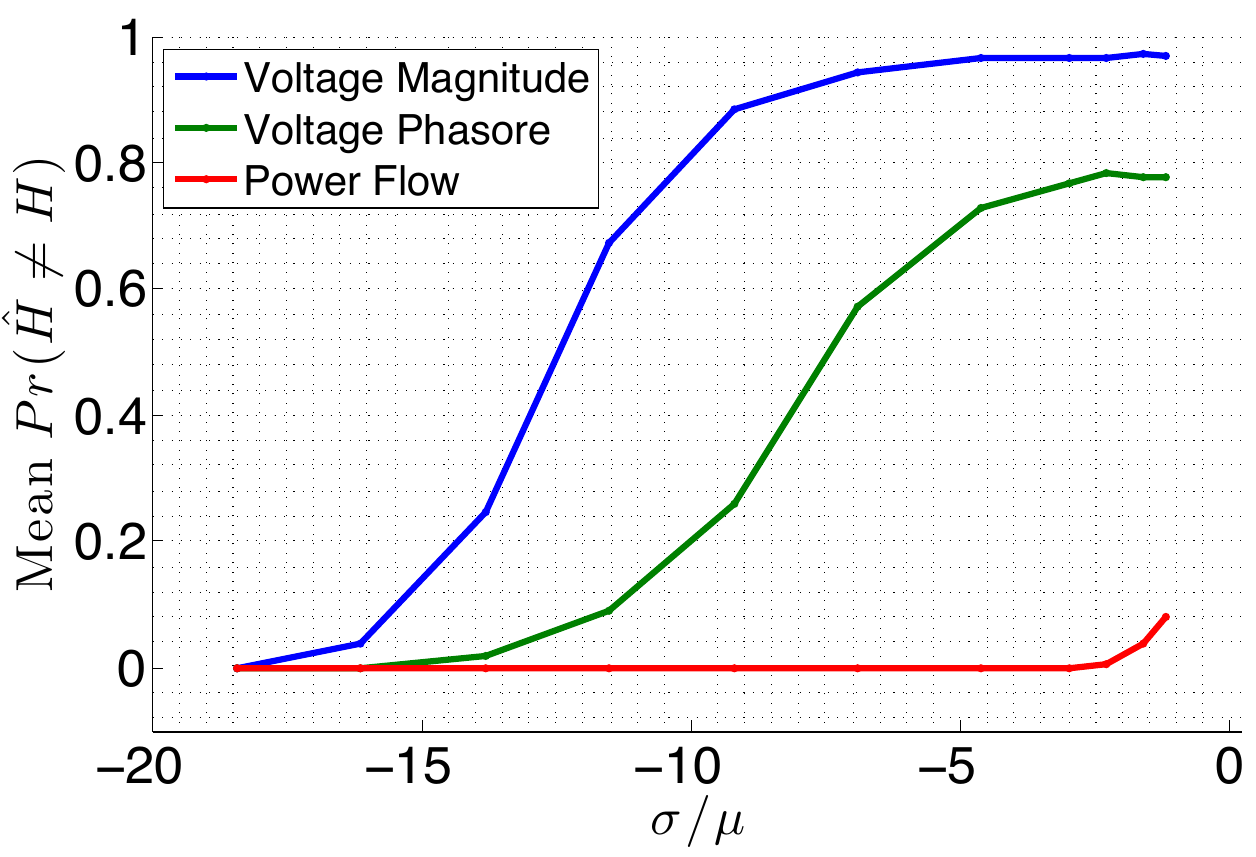}
\label{fig:IEEE123_voltage_vs_power_flow}
}
\subfigure[][]{
\hspace{-5mm}
\includegraphics[scale=0.4]{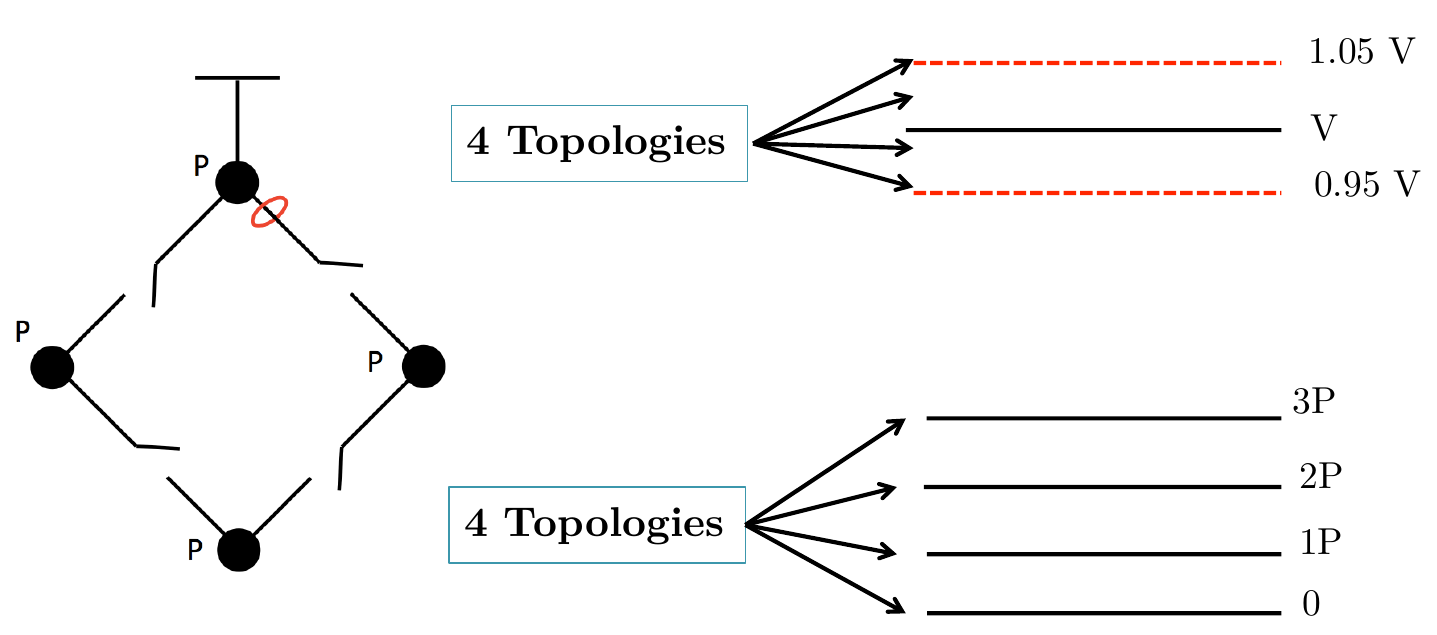}
\label{fig:voltage_vs_power_flow_hypothesis_model}
}

\caption[IEEE Tests and Power Flow Performances]{  
\subref{fig:IEEE123_placement} Performance of max and mean missed detection error for all valid placements $\mbb{M}_{\tau}$.
\subref{fig:IEEE123_voltage_vs_power_flow} Performance of voltage magnitude/phase vs. power flow under ac power flow model.
\subref{fig:voltage_vs_power_flow_hypothesis_model} Conceptual model of voltage vs. power flow performance difference.
}
\end{figure}  

First, we aim to evaluate the missed detection error over the entire set of valid placements, $\mbb{M}_{\tau}$, to characterize their performance.
From Theorem \ref{thm:spanning_tree_identifiability}, the minimum number of flow measurements is $\mu = 4$ and the set $\mbb{M}_{\tau}$ can be generated easily from  \eqref{eq:generate-valid-placement-restricted}.

To simulate the load forecasting error, we rely on the results from \cite{Sevlian2014} to model load day ahead load forecast uncertainty. 
The forecast errors are used to construct the following scaling law for the coefficient of variation: 
\begin{align}
\frac{\sigma}{\mu} =  \sqrt{ \frac{3562}{W} + 41.9} \label{eq:agg-err-curve-varmax1}.
\end{align}. 
Since the loads of each individual island is quite large, the CV of each island is close to $6.3 \%$.
Figure \ref{fig:IEEE123_placement} shows the performance of each sensor placement on the mean and max missed detection error.
We evaluate the set of restricted placements $\mbb{M}_{\tau}$ and spanning trees $\mbb{T}_{\tau}$, where $|\mbb{M}_{\tau}| = 44$.
Notice that for almost half of the placements the maximum error is negligibly small.

This analysis is further explored using a single phase of the 123 test feeders AC power flow model where voltage magnitude, phasor and power flow sensors are compared.
In this test, AMI loads at each node are given with line sensors at switches $w_1$, $w_5$, $w_7$ and $w_9$. 
The loads at each node are the default value of complex power injection.

For each hypothesis $\mc{T}$ the following is computed:
\begin{enumerate}
\item solve the AC power flow $\{ \mb{v}, \mb{s} \} = F( \mb{p}, \mb{q}; \mc{T} )$;
\item generate the three types of measurements for each switch location: 
\begin{align}
z^{true}_{v-mag}(\mc{T}) &= |\mb{v}| \\
z^{true}_{v-phsr}(\mc{T}) &= \mb{v}  \\
z^{true}_{pf}(\mc{T})        &= \mb{s};  
\end{align}
\item add additive noise to each type of measurement with some SNR: $z^{obs}(\mc{T}) = z^{true}(\mc{T}) + \epsilon$;
\item calculate the detector output 

$~~~~~~~~~~\hat{\mc{T}} = \underset{\mc{T} \in \mbb{T}_{\tau}}{\arg\min} \| z^{obs}(\mc{T}) - z^{true}(\mc{T}) \| $.
\end{enumerate}

The results of this experiment are shown in Figure \ref{fig:IEEE123_voltage_vs_power_flow}.
The simulations illustrate that power flow is vastly more powerful in separating hypothesis than voltage magnitude.
For almost all SNR values, power flow measurements are capable of distinguishing all potential hypotheses.
On the other hand, voltage magnitude and phase fail once the $SNR > -10$ dB.

This example allows us to comment on the common understanding in generalized state estimation performance.
This understanding can be seen in Figure \ref{fig:voltage_vs_power_flow_hypothesis_model} where, for a given network of $4$ topologies, the observed values (red flow monitor) are separated very differently in the case of voltage measurements and in power flow measurements. 
In practice the set of all topologies generally map to a range of $\pm 5\%$.  
Therefore, when factoring in uncertainties, the missed detection rates can be high.
On the other hand, measuring flows lead to very large changes in the observation vector over the range of hypothesis.
Power flow measurements separate each hypothesis into a larger space than do voltage measurements.

\begin{figure}[h]
\centering
\subfigure[][]{   
\includegraphics[scale=0.23]{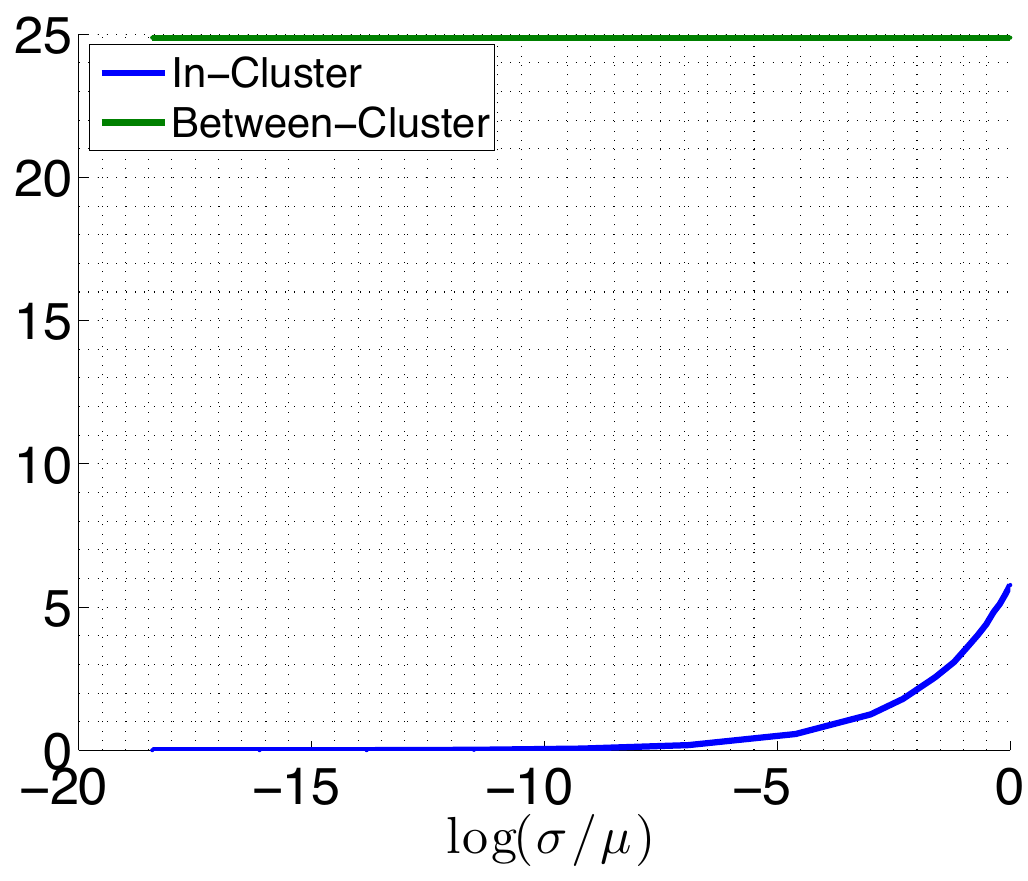}
\label{fig:pf_cluster_analysis}
}
\subfigure[][]{   
\includegraphics[scale=0.23]{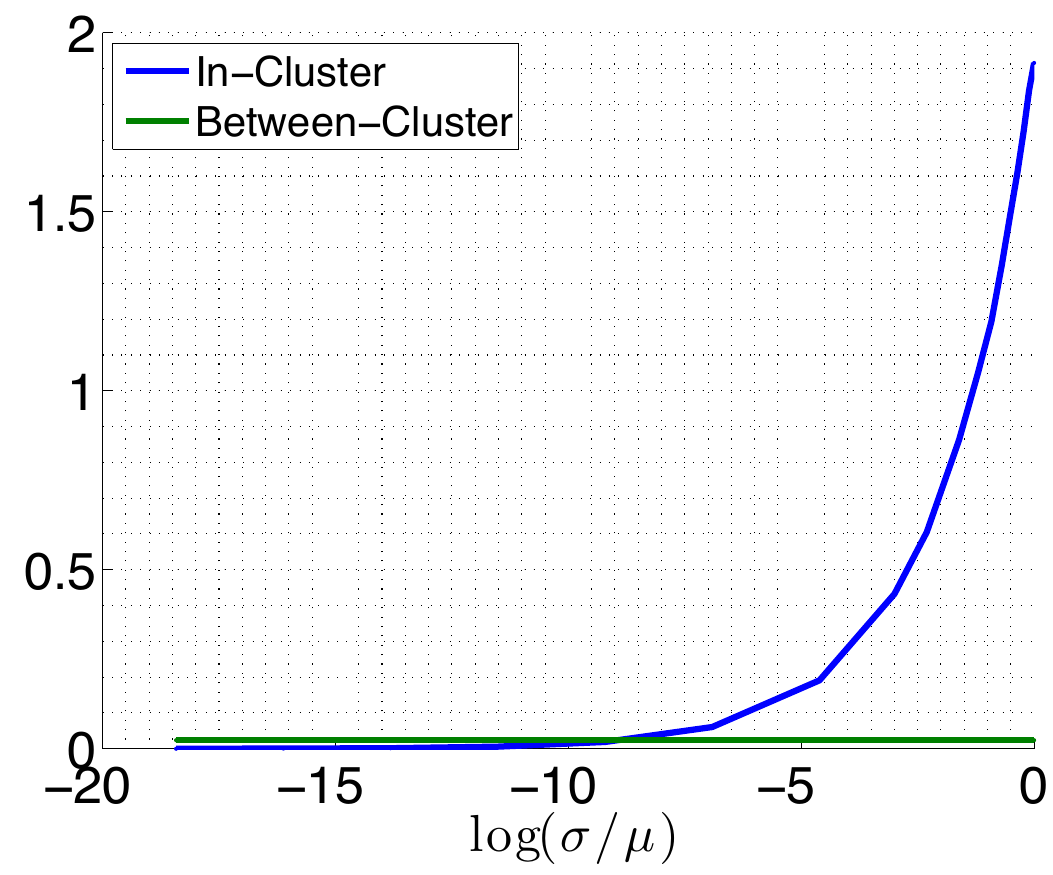}
\label{fig:vmag_cluster_analysis}
}
  
\caption[In cluster and between cluster separation.]{  
In-Cluster-Separation and Between-Cluster-Separation for both power flow  \subref{fig:pf_cluster_analysis} and voltage phasor \subref{fig:vmag_cluster_analysis}.
}
\end{figure}  

This intuition can be quantified by analyzing each measurement point in $\mbb{R}^{4}$ as belonging to some cluster center.
The metrics of between cluster separation and in cluster separation can evaluate this separability as follows:
\begin{itemize}
\item Between-Cluster-Separation (BCS) which quantifies the average separation between all cluster centers.
\begin{align}
\text{BCS} &= \frac{1}{|\mbb{T}|^2} \sum_{\mc{T}, \mc{T}^{\prime} \in \mbb{T} } \| z^{true}(\mc{T}) - z^{true}(\mc{T}^{\prime})  \|;
\end{align}
.
\item In-Cluster-Separation (ICS) which quantifies the variation of observed measurement with respect to the cluster mean.
\begin{align}
\text{ICS} &=  \frac{1}{M |\mbb{T}|} \sum_{\mc{T} \in \mbb{T} } \sum_{m =1 }^{M} \| z^{true}(\mc{T}) - z^{obs}(\mc{T}^{\prime}, m)  \|.
\end{align}
\end{itemize}

The cluster analysis for voltage magnitude and power flow are shown in Figure \ref{fig:pf_cluster_analysis}, \ref{fig:vmag_cluster_analysis}.
The between cluster separation is invariant on SNR and describes how much each observation is separated in the space.
The BCS of the power flow measurements is $25 p.u.$ units, while the voltage measurements are negligible ($0.0087$ p.u.).
Notice these values are fixed and only depend on $z^{true}(\mc{T})$.
Similarly, the ICS increases as the SNR of the measurements increase.
The two measurement types have similar growth, and differ only slightly.
Therefore it is clear to see that the large errors seen using voltage only measurements occur because for any realistic noise value, the ICS $\gg$ BCS for voltage measurements while ICS $\ll$ BCS for power flow measurements, thus verifying experimentally the claim that power flow is more robust for topology detection.

\section{Conclusion}
\label{conclusions}
This paper investigates the problem of topology detection in distribution system using smart meter forecasts and line sensing.
The problem is formulated as a spanning tree detection problem on an `island-graph' and solved for a deterministic and stochastic case.
In the deterministic case, we can guarantee correctness and efficiency of our method, while in the stochastic case we present a combinatorial complexity maximum likelihood detector as well as two approximate algorithms.
Finally, numerical simulations are performed showing the performance of the various methods and detector performance in {\color{black} the IEEE 123 Test feeder}.

\section{Appendix}

\subsection{Nomenclature Table}

\begin{table}[!htb]
\centering     
\caption{Nomenclature}   
\label{tab:FC_construction_from_T_Tprime}   
\begin{tabular}{@{}ll@{}}   
$G(V, E)$  		  		 & undirected graph $G$; vertices $V$; edges $E$ \\
$\mc{T}$   		   		& spanning tree on $G$ \\ 
$\mbb{T}$ 		   		& set of all spanning trees that are constructed on $G$ \\
$\tau$        		   		& set of edges in constructing island graph \\
$\mc{M}$    		   		& sensor placement $(\mc{M} \subset E)$ \\
$\mbb{M}$    	  	   		& set of all sensor placements leading to identifiably in $\mbb{T}$ \\
$x_n, \hat{x}_n$	   		& true and forecasted load of node $v_n$ \\
$\sigma^2_n, \Sigma$ 	   	& forecast error variance and covariance matrix \\
 $\Gamma(\mc{T}, \mc{M})$      & observation matrix for tree $\mc{T}$, sensor placement $\mc{M}$ \\
$\mb{s}_{obs}$                          & set of measured power flow \\
$\mb{s}(\mc{T}, \mb{x})$   	& true and predicted flow measured under hypothesis $\mc{T}$ \\
$g_{1,2}(\mc{M})$                      & (1) maximum / (2) mean missed detection \\
$c$                                 		& cycle in graph $G$ \\ 
$\mc{C}(G)$                   		& cycle space of $G$ or set of all possible cycles\\
$\mc{FC}(\mc{T})$          		& fundamental cycle Basis of $G$ constructed by $\mc{T}$ \\
$\mc{FC}_{\mc{M}}$       		& fundamental cycle Basis constructed by $\mc{M}$ \\
$\lambda_k$  				& $\lambda_k$ is $k^{\text{th}}$ cycle in $\mc{FC}_{\mc{M}}$ \\
$\mu(G)$                         		& circuit rank of graph \\
$n(G)$                            		& Number of connected components \\
$\Delta E$                       		& edge exchange operation to generate new tree: $\mc{T} \rightarrow \mc{T}^{\prime}$ \\
$\mc{K}(c)$                     		& cycle-sensor map indicating all sensors on cycle $c$ \\ 
\end{tabular}
\end{table}
\subsection{Useful Graph Theory Definitions and Results}
\label{section-Useful-Graph-Theory-Definitions-and-Results}  

\begin{figure}[h]
\centering
\subfigure[][]{    
\includegraphics[scale=0.47]{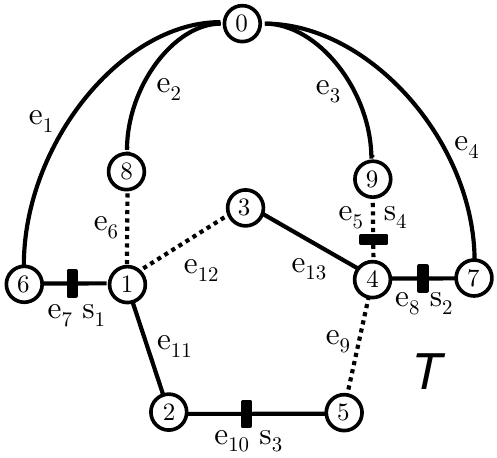}   	
\label{fig:appendix_T1}
}
\subfigure[][]{ 
\includegraphics[scale=0.47]{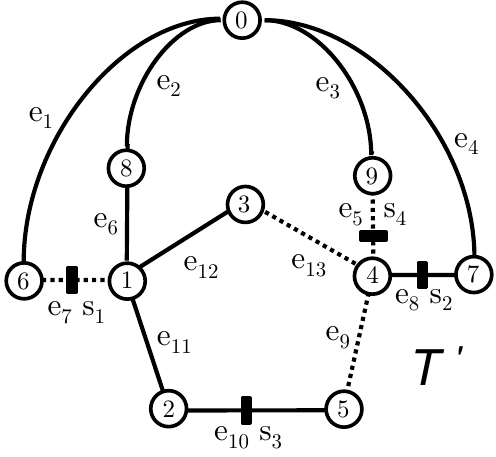}
\label{fig:appendix_T2}
}
\caption[Two spanning trees to illustrate properties.]{
Two spanning trees $\mc{T}$ (\ref{fig:appendix_T1}) and $\mc{T}^{\prime}$ (\ref{fig:appendix_T2}), with sensor placement $\mc{M}$.
}
\end{figure}

\noindent
Refer to \cite{Diestel2000} for a more thorough presentation.  

\noindent
\textit{Cycle}:
A cycle $c = \{e_{1}, \hdots, e_{N} \}$ is a connected subgraph where each vertex is of degree 2.

\smallskip
\noindent
\textit{Cycle Space}:
The set $\mc{B}_E$ is the power set over the edge set $\mc{B}_E = \{0, 1\}^{|E|}$.
Any cycle $c$ is a vector in the space $\mc{B}_E$.
Vector addition is defined as $c^{\prime} = c_1 \oplus c_2$ where new cycles are constructed via symmetric difference operation on edges: $E_1 \bigoplus E_2 = (E_1 \cup E_2) \setminus (E_1 \cap E_2)$.
The cycle space $\mc{C}(G)$ of the graph is the vector space of all possible cycles in a particular graph. 

For example consider the cycles in the graph $G$ in Figure \ref{fig:appendix_T1}.
For each of the dashed edges ($e_6$ and $e_7$), adding them back to the tree will construct cycles associate with them, along with cycle formed by their addition:
\begin{align*}
c_1 &= \{e_7, e_1, e_2, \underline{e_6} \} \\
c_2 &= \{e_1, e_2, e_3, e_5, e_{13}, \underline{e_{12}} \} \\
c_3 &= c_1 \oplus c_2 =  \{e_7, e_1, e_3, e_4, e_{13}, \underline{e_{12}} \}.
\end{align*}
It is easy to see that $c_1$, $c_2$ and $c_3$ are all cycles in $G$.    

\smallskip
\noindent
\textit{Circuit Rank}:
The circuit rank of a graph is given by $\mu = |E| - |V| + n(G)$ where $n(G)$ is the number of connected components of the graph.
For example, the island graph in Figure \ref{fig:appendix_T1} has $n(G)=1$, and $\mu = |13| - |10| + 1 = 4$.

\smallskip
\noindent
\textit{Cycle Basis}: 
The analog of a vector basis for cycle spaces is the cycle basis.
A basis $\mc{B}_{C} \subset \mc{B}_{E}$ is the smallest number of cycles whereby all other cycles can be constructed via symmetric difference operations.
The dimension of $\mc{C}(G)$ and $\mc{B}_{C}$ is $\mu$, the circuit rank of the graph.
Therefore, $\mu(G)$ is the smallest number of cycles required to produce all other cycles on a graph.
We use $\mu(G)$ and $\mu$ interchangeably, whenever convenient.

For example, all cycles in the graph in Figure \ref{fig:appendix_T1} can be constructed from $\mu$ independent cycles which form the cycle basis of the graph.

\smallskip
\noindent
\textit{Fundamental Cycle Basis}:
A Fundamental Cycle Basis $\mc{FC}$ is a cycle basis constructed using the following procedure: given a spanning tree $\mc{T}$, enumerate the set of edges in $G$ but not $\mc{T}$.
Then for each $e \in E \setminus \mc{T}$, construct $\mc{T} + e$ then find the single cycle $c$ associated with $e$.
So we can generate $\mu$ cycles in $\mc{FC}$, which is the dimensionality of the basis.
An equivalent definition for a Fundamental Cycle Basis is that each cycle will have one unique edge which is in no other cycle.

\begin{table}[!htb]
\centering     
\caption{Fundamental Cycle Basis from $\mc{T}$, $\mc{T}^{\prime}$ and cycle-measurement map $\mc{K}(c)$}   
\label{tab:FC_construction_from_T_Tprime}   
\begin{tabular}{@{}lll@{}}   
\toprule        
 $\mc{FC}(\mc{T})$                                                                   & &  $\mc{K}(c)$  \\
 $c_1 = \{e_{6}, e_{2}, e_{1}, e_{7} \} $                                     & & $\mc{K}(c_1) = \{s_1\}$   	            \\  
 $c_2 = \{e_{12}, e_{13}, e_{8}, e_{4}, e_{1} , e_{7}\}$             & &  $\mc{K}(c_2) = \{ s_1, s_2 \}$           \\   
 $c_3 = \{e_{5}, e_{3}, e_{4}, e_{8} \}$                                      &  &   $\mc{K}(c_3) = \{s_2, s_4\}$           \\   
 $c_4 = \{e_{9}, e_{8}, e_{4}, e_{1}, e_{7}, e_{11}, e_{10} \}$   & &$\mc{K}(c_4) = \{s_1, s_2, s_4\}$        \\   
                                                                                                  &  &                                                           \\
 $\mc{FC}(\mc{T}^{\prime})$                                                     & &   $\mc{K}(c)$                                       \\
 $c_1 = \{e_{7}, e_{6}, e_{2}, e_{1} \} $      				      & &   $\mc{K}(c_1) = \{s_1\}$                     \\   
 $c_2 = \{e_{13}, e_{8}, e_{4}, e_{2}, e_6, e_{12} \} $               & &     $\mc{K}(c_2) = \{ s_1, s_2 \}$         \\   
 $c_3 = \{e_{5}, e_{8}, e_{4}, e_{3} \} $                                     &  &  $\mc{K}(c_3) = \{s_2, s_4\}$             \\     
 $c_4 = \{ e_{9}, e_{8}, e_{4}, e_{2}, e_{6}, e_{11},  e_{10} \}$ &  &    $\mc{K}(c_4) = \{s_1, s_2, s_4\}$   \\
\bottomrule 
\end{tabular}
\end{table}
From spanning trees $\mc{T}$ and $\mc{T}^{\prime}$ in Figure \ref{fig:appendix_T2}, \ref{fig:appendix_T2} we construct the following Fundamental Cycle Basis in Table \ref{tab:FC_construction_from_T_Tprime} Column 1.
Note that $\mc{FC}(\mc{T}) \neq \mc{FC}(\mc{T}^{\prime})$.
However, this does not always occur, see \cite{Syslo1982} for more details.

\subsection{Proof of Theorem \ref{thm:spanning_tree_identifiability}}
Recall, the theorem states that as long as $G \setminus \mc{M}$ forms a spanning tree, then for any two trees $\mc{T} \neq \mc{T}^{\prime}$, $\mb{s}(\mc{T}, \mb{x}) \neq \mb{s}(\mc{T}^{\prime}, \mb{x})$.
Equivalently $\Delta \mb{s} \equiv \mb{s}(\mc{T}, \mb{x}) - \mb{s}(\mc{T}^{\prime}, \mb{x}) \neq 0$.

We now prove Theorem \ref{thm:spanning_tree_identifiability} in the following steps.
\begin{itemize}    
\item [1] 
We Introduce a cycle-measurement mapping object $\mc{K}(c)$ which tracks sensors on a cycle and show that an independence property if  $G \setminus \mc{M}$ forms a spanning tree.
\item [2] 
We construct an edge exchange procedure which encodes the transition: $\mc{T} \rightarrow \mc{T}^{\prime}$ between any two spanning trees as a set of single cycle edge exchanges on the cycles of $\mc{FC}(\mc{T})$.  
We show that this encoding always exists.
\item [3] 
We show that sensor measurements in $\mc{K}(c)$ decouple from one cycle to another under single edge exchanges.
\item [4]
We use the independence of $\mc{K}(c)$ and decoupling of single edge exchange measurements in $c$ to show inductively that no multiple edge exchanges of any size exist where $\Delta \mb{s} = 0$, if $G \setminus \mc{M}$ is a spanning tree.
\end{itemize}  
\subsubsection{ Cycle-Measurement Map}  
\label{subsubsection-cycle-measurement-map}

The cycle-measurement map encodes which measurements lie on the edges of a particular cycle.
The map $\mc{K}(c)$ is defined according to cycles $c \in \mc{FC}$, for any arbitrary $\mc{FC}$ in G.

\begin{define}
With respect to some $\mc{FC}$, a cycle-measurement map is $\mc{K}:c\rightarrow\mc{M}$ for all $c \in \mc{FC}$ where $s_k \in \mc{K}(c)$ if $s_k$ is on an edge in $c$.
\end{define}
We can also write it as $\mc{K}(c) = c \cap \mc{M}$, though this is an abuse of notation.

For the fundamental cycles associated with the tree in Figure \ref{fig:appendix_T1}, \ref{fig:appendix_T2} we construct the following map shown in Table 
\ref{tab:FC_construction_from_T_Tprime} Column 2.
We now aim to develop some useful properties of this mapping function.

Consider placement $\mc{M}$, and constructed tree $\mc{T} = G \setminus \mc{M}$.
We must have $\mc{K}(c_k) = \{s_k\}$ where $c_k$ is the $k^{\text{th}}$ cycle in $\mc{FC}(\mc{T})$.  
This is obvious by construction.

We denote $\lambda_k$ to be the $k^{\text{th}}$ cycle in $\mc{FC}(G \setminus \mc{M})$ ($\mc{FC}_{\mc{M}}$ shorthand).
It is clear that $\mc{K}(\lambda_k) = \{s_k\}$, for these cycles.

We see that by looking at $\mc{K}(c)$, for any cycle in an arbitrary $\mc{FC}$, we can encode the cycles construction using basis 
$\mc{FC}_{\mc{M}} = $$\{\lambda_1, \hdots, \lambda_{\mu} \}$.
\begin{lem}
If $G \setminus \mc{M}$ forms a spanning tree, then for any $\mc{FC}$ and $c \in \mc{FC}$, $c = \underset{k:s_k \in \mc{K}(c) }{\bigoplus} \lambda_k$.
\end{lem}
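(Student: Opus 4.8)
The plan is to read the lemma as a statement about coordinates in the cycle basis $\mc{FC}_{\mc{M}}$. Since $E(G) \setminus E(\mc{M})$ is a spanning tree by hypothesis, $\mc{FC}_{\mc{M}} = \{\lambda_1, \hdots, \lambda_\mu\}$ is a fundamental cycle basis and hence a basis of the cycle space $\mc{C}(G)$, viewed as a vector space over $\mathbb{F}_2$ with $\oplus$ as addition. Thus every cycle $c$ — in particular every $c \in \mc{FC}$, though nothing special about $\mc{FC}$ is actually needed — admits a unique expansion $c = \bigoplus_{k \in S} \lambda_k$ for some index set $S \subseteq \{1, \hdots, \mu\}$. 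The whole content of the lemma is then the identification $S = \{k : s_k \in \mc{K}(c)\}$.

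First I would record the ``diagonal'' property already established above: the co-tree edges of the spanning tree $E(G)\setminus E(\mc{M})$ are precisely the sensor edges $\{e_1,\hdots,e_\mu\} = E(\mc{M})$, and by the defining property of a fundamental cycle basis each co-tree edge lies on exactly one of its fundamental cycles. Hence $e_j \in \lambda_k$ if and only if $j = k$, which is exactly the relation $\mc{K}(\lambda_k) = \{s_k\}$ noted in the text.

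Next I would use that membership of a fixed edge is $\mathbb{F}_2$-linear under $\oplus$: a given edge belongs to a symmetric difference of cycles iff it belongs to an odd number of the summands. Applying this to the sensor edge $e_j$ and the expansion $c = \bigoplus_{k \in S}\lambda_k$, and invoking $e_j \in \lambda_k \iff j=k$, I obtain $e_j \in c \iff j \in S$. Translating through the cycle-measurement map this reads $s_j \in \mc{K}(c) \iff j \in S$, so $S = \{k : s_k \in \mc{K}(c)\}$; substituting back yields $c = \bigoplus_{k:\, s_k \in \mc{K}(c)} \lambda_k$, as claimed.

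The main point to get right is not a computation but the observation that the sensor edges behave as a dual basis reading off the $\mc{FC}_{\mc{M}}$-coordinates of any cycle; once this is in view, everything reduces to the diagonal relation $\mc{K}(\lambda_k) = \{s_k\}$ combined with linearity of edge membership. The only other step requiring care is justifying that $\mc{FC}_{\mc{M}}$ really is a basis, so that the expansion exists and is unique — this follows because the spanning-tree hypothesis makes it a fundamental cycle basis, which spans $\mc{C}(G)$ and has cardinality $\mu(G) = \dim \mc{C}(G)$.
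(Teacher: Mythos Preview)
Your proof is correct and follows essentially the same approach as the paper: both expand $c$ in the basis $\mc{FC}_{\mc{M}}$ and identify the index set by reading off which sensor edges lie on $c$, using the diagonal property $\mc{K}(\lambda_k)=\{s_k\}$. Your version is more explicit about the $\mathbb{F}_2$-linearity of edge membership and the ``dual basis'' role of the sensor edges, but the underlying argument is the same.
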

\begin{proof}
Any cycle $c$ can be represented as a combination of cycles $\lambda \in \mc{FC}_{\mc{M}}$ since the $\mc{FC}_{\mc{M}}$ is a cycle basis.
If any other $\lambda^{\prime}$ outside of the set $\{\lambda_k : s_k \in \mc{K}(c) \}$ is used to construct $c$ then $\mc{K}(c)$ will include edge containing a $s^{\prime}$. 
Conversely if any additional $\lambda^{\prime}$ is needed to construct $c$, it's $s$ must be in $\mc{K}(c)$.
\end{proof}
Now we can prove a general case of `independence' between any two subsets of cycles and the measurements that are placed on them.
\begin{lem}
\label{cycle-sensor-linear-independence}
If $G \setminus \mc{M}$ forms a spanning tree, then for any $\mc{FC}$ and subsets $A \neq B$ of cycles in $\mc{FC}$ we must have that: 
\begin{align}
\underbrace{  \underset{k \in A}{\bigcup} \mc{K}(c_k) }_{\mc{K}_A} \neq \underbrace{ \underset{k \in B}{\bigcup} \mc{K}(c_k) }_{\mc{K}_B}. 
\end{align}
\end{lem}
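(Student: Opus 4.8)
The plan is to reduce the statement to a property of the coordinate supports of the cycles $c_k$ in the basis $\mc{FC}_{\mc{M}} = \{\lambda_1,\hdots,\lambda_{\mu}\}$ and then exploit the private-edge characterization of a fundamental cycle basis. By the immediately preceding lemma, every cycle $c$ satisfies $c = \bigoplus_{k:\, s_k \in \mc{K}(c)} \lambda_k$, so $\mc{K}(c)$ is precisely the support of $c$ with respect to $\mc{FC}_{\mc{M}}$. Writing $S_k = \{\, j : s_j \in \mc{K}(c_k)\,\}$ for each $c_k \in \mc{FC}$, we have $\bigcup_{k\in A}\mc{K}(c_k) = \{\, s_j : j \in \bigcup_{k\in A} S_k \,\}$, so, because the $s_j$ are distinct, the lemma is equivalent to showing that the map $A \mapsto \bigcup_{k\in A} S_k$ is injective on subsets of $\{1,\hdots,\mu\}$.

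I would proceed by contradiction. Suppose $A \neq B$ yet $\bigcup_{k\in A}\mc{K}(c_k) = \bigcup_{k\in B}\mc{K}(c_k)$, and (after possibly swapping $A$ and $B$) fix some $c_m$ with $m \in A\setminus B$. The objective is to exhibit a single sensor that belongs to the left union but not the right, which contradicts their equality.

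The mechanism I would use is the private-edge characterization of a fundamental cycle basis: each cycle $c_m$ carries a unique edge $e_m$ that lies in no other cycle of $\mc{FC}$. The aim is to show that $e_m$ registers a sensor $s_{j^\ast} \in \mc{K}(c_m)$ that appears in no $\mc{K}(c_k)$ for $k \neq m$; then $s_{j^\ast} \in \bigcup_{k\in A}\mc{K}(c_k)$, while $s_{j^\ast} \notin \bigcup_{k\in B}\mc{K}(c_k)$ because $m \notin B$, delivering the contradiction. To organize the bookkeeping I would pass to $\mbb{Z}_2$ linear algebra: the matrix $P$ with $P_{jk}=1$ iff $j\in S_k$ is the (invertible) change-of-basis matrix from $\mc{FC}_{\mc{M}}$ to $\mc{FC}$, and the private-edge property should pin a distinguishing coordinate to each column.

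The hard part will be certifying that the private edge $e_m$ of $c_m$ genuinely carries a sensor, i.e. $e_m \in E(\mc{M})$, and hence contributes a coordinate $s_{j^\ast}$ absent from every other $\mc{K}(c_k)$; this is the step on which the whole argument turns. I would establish it from the structural link between the placement and the basis supplied by Theorem \ref{thm:spanning_tree_identifiability} and the bijection of Corollary \ref{thm-tree-placement-mapping} (so that $|E(\mc{M})| = \mu$ and $\mc{FC}_{\mc{M}}$ is itself a fundamental cycle basis), combined with a rank and parity count on $P$ over $\mbb{Z}_2$ that forces each cycle of $\mc{FC}$ to introduce its own marking sensor into the union. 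Securing this private-sensor existence is the crux, and it is where I expect the principal difficulty to lie.
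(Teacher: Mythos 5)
Your reduction of the lemma to the supports $S_k$ in the basis $\mc{FC}_{\mc{M}}$ matches the paper's setup, but the step you yourself flag as the crux --- that the private edge $e_m$ of $c_m$ lies in $E(\mc{M})$ and hence supplies a sensor $s_{j^\ast} \in \mc{K}(c_m)$ appearing in no other $\mc{K}(c_k)$ --- is not merely difficult, it is false. The private edge of a cycle in an arbitrary $\mc{FC}(\mc{T})$ is the co-tree edge of the tree $\mc{T}$ generating that basis, whereas the measured edges are the co-tree edges of the \emph{different} spanning tree $E(G)\setminus E(\mc{M})$; the two coincide only in the special case $\mc{FC} = \mc{FC}_{\mc{M}}$. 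Already in a triangle with edges $a, b, d$, a sensor on $a$, and $\mc{FC}$ generated by the tree $\{a,b\}$, the unique cycle's private edge is $d \notin E(\mc{M})$. Worse, a cycle need not possess \emph{any} sensor unique to it: in the paper's own Table \ref{tab:FC_construction_from_T_Tprime} one has $\mc{K}(c_1) = \{s_1\}$ and $\mc{K}(c_2) = \{s_1, s_2\}$ with $s_2$ also appearing in $\mc{K}(c_3)$, so every sensor on $c_2$ is shared with another cycle of the same $\mc{FC}$. Your fallback --- that invertibility of the change-of-basis matrix $P$ over $\mbb{Z}_2$ forces each cycle to introduce its own marking sensor --- fails for the same reason: the invertible matrix with columns $(1,0)^{T}$ and $(1,1)^{T}$ has its first column's support contained in the second's, so no rank or parity count on $P$ can manufacture the private coordinate your contradiction requires. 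Note also that the per-cycle private-sensor statement you are after is essentially Corollary \ref{cor-unique-sensor-in-each}, which the paper \emph{derives from} this lemma, so even a repaired version of your route would run in a circle.

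The paper's proof avoids private sensors entirely and pushes the contradiction back into the cycle space: by the preceding lemma, a sensor set determines a cycle through the basis $\{\lambda_k\}$, namely $c_A = \bigoplus_{k : s_k \in \mc{K}_A} \lambda_k$ and likewise $c_B$, so the assumed equality $\mc{K}_A = \mc{K}_B$ forces $c_A = c_B$; but $c_A$ and $c_B$ arise from distinct subsets of the fundamental cycles of one $\mc{FC}$, and the private-\emph{edge} property of a fundamental cycle basis --- where the private edge is merely an edge of the graph, measured or not --- exhibits an edge lying in one and not the other. In other words, the placement hypothesis enters only through the injectivity of the encoding $c \mapsto \mc{K}(c)$, which the preceding lemma already supplies; the separating object is an edge, never a sensor. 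If you wish to salvage your linear-algebraic framing, target instead the remark immediately following the lemma in the paper: the indicator vectors of the sets $\mc{K}(c_k)$ are linearly independent over $\mbb{Z}_2$ (they are the coordinate vectors of the independent cycles $c_k$ under an invertible change of basis), which distinguishes symmetric differences over distinct index sets directly --- that is the statement your support-matrix argument actually proves, and it is the form the paper's subsequent inductive argument uses.
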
     

\begin{proof}
Suppose that there exists some $\mc{FC}$ and partitions $A$, $B$ where the terms $\mc{K}_A$ and  $\mc{K}_B$ are equal.
Since $\mc{K}_A$ and $\mc{K}_B$ encode some cycle we have that $c_A = c_B$.
However, since $c_A$ and $c_B$ {\color{black}are by} definition fundamental cycles in $\mc{FC}$ there will exist an edge in $c_A$ that is not in $c_B$, thus $c_A \neq c_B$.
\end{proof}

To see an example of this, consider the cycle-measurement map generated by $\mc{FC}(\mc{T})$ in Figure \ref{fig:appendix_T1} in Table \ref{tab:FC_construction_from_T_Tprime}.
Notice that we cannot construct any partitions A, B where all the covered measurements are equal.
For example if $A = \{c_1, c_2 \}$ and $B = \{c_3, c_4 \}$, we have that $\underset{k \in A}{\bigcup} \mc{K}(c_k) = \{s_1, s_2 \}$ and $\underset{k \in B}{\bigcup} \mc{K}(c_k) = \{s_1, s_3, s_4 \}$.

Notice that the cycle-measurement-map $\mc{K}$ constructed for the network in Figure \ref{fig:island_graph_with_spanning_tree} does not satisfy that $G \setminus \mc{M}$ is a spanning tree.
In this case we have that $\mc{K}(c_1) = \{s_1 \}$, $\mc{K}(c_2) = \{s_1, s_2 \}$, $\mc{K}(c_3) = \{s_2 \}$ and $\mc{K}(c_4) = \{s_1, s_2 \}$ for the same cycles.
The cycle partition $A = \{c_1\}$, $B= \{c_4 \}$ clearly leads to the independence property not holding.
  
This result leads to the following equivalent results which are used in our proof.
\begin{cor}
\label{cor-unique-sensor-in-each}
A special case is that $\forall c, A \subset \mc{FC}$, $\exists s \in \mc{K}(c)$ s.t. $s \not\in \underset{k \in A}{\bigcup} \mc{K}(c)$, for any $A$ not including $c$.
\end{cor}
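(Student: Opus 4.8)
The plan is to derive this as an immediate consequence of Lemma \ref{cycle-sensor-linear-independence}, by translating its ``distinct unions'' conclusion into the ``unique sensor per cycle'' form required here. Fix an arbitrary cycle $c \in \mc{FC}$ and an arbitrary subset $A \subset \mc{FC}$ with $c \notin A$, and abbreviate $\mc{K}_A = \bigcup_{k \in A} \mc{K}(c_k)$. The desired statement is logically equivalent to the non-containment $\mc{K}(c) \not\subseteq \mc{K}_A$: any sensor of $\mc{K}(c)$ that fails to lie in $\mc{K}_A$ is precisely the $s$ whose existence is claimed.

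First I would argue by contradiction. Suppose instead that $\mc{K}(c) \subseteq \mc{K}_A$. I then form the augmented family $B = A \cup \{c\}$, which is distinct from $A$ exactly because $c \notin A$, so $A \neq B$. Computing the covered sensors of $B$ gives $\mc{K}_B = \mc{K}_A \cup \mc{K}(c) = \mc{K}_A$, where the final equality uses the assumed containment. This produces two distinct subsets $A \neq B$ of $\mc{FC}$ with $\mc{K}_A = \mc{K}_B$, which directly contradicts Lemma \ref{cycle-sensor-linear-independence}; the lemma applies since $E(G) \setminus E(\mc{M})$ is assumed to form a spanning tree, and its hypothesis permits arbitrary subsets (including $A = \emptyset$, so no boundary case needs separate treatment). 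Hence $\mc{K}(c) \not\subseteq \mc{K}_A$, and selecting any $s \in \mc{K}(c) \setminus \mc{K}_A$ establishes the corollary.

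I do not anticipate a genuine obstacle, since all of the combinatorial content is already carried by Lemma \ref{cycle-sensor-linear-independence}. The only point demanding care is the reformulation step: the lemma rules out two \emph{different} families of cycles covering the same sensor set, whereas the corollary asserts that one fixed cycle always contributes a sensor beyond those of a disjoint-from-it family. The bridge between the two is the observation that, under the containment hypothesis, appending $c$ to $A$ changes the family of cycles but leaves the covered sensor set fixed, which is exactly the forbidden configuration. Once this is noticed the argument is a one-line contrapositive, so the emphasis in writing it up should be on making the equivalence transparent rather than on any additional estimation.
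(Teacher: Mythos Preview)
Your proposal is correct and matches the paper's intent: the paper states Corollary~\ref{cor-unique-sensor-in-each} without a separate proof, treating it as an immediate consequence of Lemma~\ref{cycle-sensor-linear-independence}, and your contrapositive argument (augmenting $A$ by $c$ to produce distinct subsets with equal sensor unions) is precisely the one-line derivation that justifies this.
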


\begin{RK}    
\label{cor-subset-unique-sensor-count}
The subset $C \subset \mc{FC}$, where $|C| = N$ will have at least $N$ unique sensors.
\end{RK}

\begin{RK}  
The indicator vector associated with each $\mc{K}(c)$ are a set of linearly independent vectors.
\end{RK}

\subsubsection{ Edge Exchange Operator}
We encode the transition between any any two spanning trees and show that any such transition can be represented as a set of single edge changes.
To motivate this, consider the trees $\mc{T}$ and $\mc{T}^{\prime}$ in Figure \ref{fig:appendix_T1} and \ref{fig:appendix_T2}.
The removed edges from $G$ in each tree are $E_{R} = \{e_6, e_{12}, e_9, e_5 \}$ and $E^{\prime}_{R} = \{e_7, e_{13}, e_9, e_5 \}$.
In both trees, $e_9$ and $e_5$ do not change.

The main question we want to answer is how to encode the transition between trees by single edge exchanges. 
Namely, if we define a $\Delta E$ operation, do we encode as $\Delta E = \{e_6 \rightarrow e_7, e_{12} \rightarrow e_{13} \}$ or $\Delta E = \{e_6 \rightarrow e_{13}, e_{12} \rightarrow e_{7} \}$. 

\begin{table}[!htb]
\centering     
\caption{Fundamental Cycle Basis from $\mc{T}$, $\mc{T}^{\prime}$ and cycle-measurement map $\mc{K}(c)$}   
\label{tab:T_Tprime_edge_exchange}   
\begin{tabular}{@{}lllll@{}}   
\toprule          
 $\mc{FC}(\mc{T})$                                                                     & &  $c \cap E_R$      & &         $c \cap E_R$                         \\
 $c_1 = \{e_{6}, e_{2}, e_{1}, e_{7} \} $                                      & &   $\{e_{6} \}$       & &              $\{ e_7 \}$    \\  
 $c_2 = \{e_{12}, e_{13}, e_{8}, e_{4}, e_{1} , e_{7}\}$              & &   $\{e_{12} \}$     & &              $\{ e_7, e_{13} \}$                 \\   
 $c_3 = \{e_{5}, e_{3}, e_{4}, e_{8} \}$                                       & &   $\{e_9 \}$         & &              $\{ e_9 \}$     \\   
 $c_4 = \{e_{9}, e_{8}, e_{4}, e_{1}, e_{7}, e_{11}, e_{10} \}$    & &   $\{ e_{5} \}$     & &               $\{ e_5 \}$      \\   
\bottomrule 
\end{tabular}
\end{table}

This can be resolved if we look at edge exchanges with respect to $\mc{FC}(\mc{T})$, as shown in Table \ref{tab:T_Tprime_edge_exchange}.
Column 1 repeats the cycles in $\mc{FC}(\mc{T})$.
Column 2 maps edges in $E_{R}$ onto $\mc{FC}(\mc{T})$ and Column 3 maps edges in $E^{\prime}_{R}$ onto $\mc{FC}(\mc{T})$.
This can be seen as an identical mapping function as $\mc{K}(c)$ in Section \ref{subsubsection-cycle-measurement-map}, except we replace edges with measurements on them, with edges in $E_{R}$.
Notice that we can now define a cycle by cycle set of edge exchanges that define the transition from $\mc{T} \rightarrow \mc{T}^{\prime}$.
So on cycle $c_1$ we have $\Delta e  =  (e_6 \rightarrow e_7)$ and on cycle $c_2$ we have $\Delta e  =  (e_{12} \rightarrow e_{13})$. 

\begin{define}
An edge exchange with respect to $\mc{FC}$ is  $\Delta E = \{ \Delta e_1, \hdots, \Delta e_\mu \}$ where $\Delta e_k = (e_k \rightarrow e^{\prime}_k)$, $e_k \in E_R$,  $e^{\prime}_k \in E^{\prime}_R$ and $e_k, e^{\prime}_k  \in c_k$.
\end{define}

We can generate an edge exchange encoding as follows.
First assign $e_k$ the edge in $E_R$ used to construct $c_k \in \mc{FC}(\mc{T})$.  
For $e^{\prime}_k$, we use the following procedure:
\begin{enumerate}
\item starting at $c_1$, set $e^{\prime}_1$ to be any element in $c \cap E^{\prime}_R$;
\item for the $k^{th}$ $c_k$, set $e^{\prime}_k$ to be any element in $c_k \cap E^{\prime}_R$ that has not already been assigned to previous $e^{\prime}_1 \hdots, e^{\prime}_{k-1}$.
\end{enumerate}

\begin{lem}
\label{lem-edge-exchange-existence}
For any two $\mc{T}, \mc{T}^{\prime}$, at least one edge exchange procedure always exists.
\end{lem}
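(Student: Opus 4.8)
The plan is to recognize the construction of a valid $e^{\prime}_k$ assignment as a bipartite matching problem and settle it with Hall's marriage theorem, drawing the required matching condition directly from the cycle--measurement machinery already established. First I would observe that the edges $e_1, \hdots, e_\mu$ are forced: each cycle $c_k \in \mc{FC}(\mc{T})$ has a unique chord in $E_R = E(G) \setminus E(\mc{T})$, so $c_k \cap E_R = \{e_k\}$ with the $e_k$ pairwise distinct. The only real content of the lemma is therefore the existence of a choice of $e^{\prime}_k \in c_k \cap E^{\prime}_R$, one per cycle, that are pairwise distinct --- that is, a system of distinct representatives for the family $\{ c_k \cap E^{\prime}_R \}_{k=1}^{\mu}$, where $E^{\prime}_R = E(G) \setminus E(\mc{T}^{\prime})$.

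The key structural observation is that $E^{\prime}_R$ is itself a valid sensor placement in the sense of Theorem \ref{thm:spanning_tree_identifiability}, since its complement $E(G) \setminus E^{\prime}_R = E(\mc{T}^{\prime})$ is a spanning tree. Consequently every result proved for a generic valid placement applies verbatim with $\mc{M} := E^{\prime}_R$ and the cycle-measurement map $\mc{K}(c) := c \cap E^{\prime}_R$, read off the fundamental cycle basis $\mc{FC} := \mc{FC}(\mc{T})$. In particular Remark \ref{cor-subset-unique-sensor-count} (equivalently Corollary \ref{cor-unique-sensor-in-each}, which rests on Lemma \ref{cycle-sensor-linear-independence}) states that any subcollection $C \subseteq \mc{FC}(\mc{T})$ of $N$ cycles covers at least $N$ distinct edges of $E^{\prime}_R$, i.e. $\big| \bigcup_{c \in C} (c \cap E^{\prime}_R) \big| \geq |C|$. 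This is exactly Hall's marriage condition for the family $\{ c_k \cap E^{\prime}_R \}$, and its $N=1$ instance records the basic fact that a fundamental cycle cannot lie entirely inside the tree $\mc{T}^{\prime}$, so each set $c_k \cap E^{\prime}_R$ is nonempty.

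With the matching condition verified I would invoke Hall's marriage theorem to produce an injective assignment $k \mapsto e^{\prime}_k$ with $e^{\prime}_k \in c_k \cap E^{\prime}_R$. Because $|E^{\prime}_R| = \mu = |\mc{FC}(\mc{T})|$, this injection is in fact a bijection onto $E^{\prime}_R$, so every removed edge of $\mc{T}^{\prime}$ is used exactly once and $\Delta E = \{(e_k \rightarrow e^{\prime}_k)\}_{k=1}^{\mu}$ is a well-defined edge exchange encoding the transition $\mc{T} \rightarrow \mc{T}^{\prime}$.

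The main obstacle is conceptual rather than computational. The greedy ordering written in the procedure (``assign any not-yet-used edge of $c_k \cap E^{\prime}_R$'') need not succeed on its own, since a left-to-right pass can exhaust a shared representative prematurely; the lemma only asserts that \emph{some} valid assignment exists, and Hall's theorem delivers precisely this non-constructive existence. The real work is in verifying the marriage condition, which is where the independence property of $\mc{K}$ is used, and the crucial step is the translation of that property from ``sensors'' to ``chords of $\mc{T}^{\prime}$'' via the observation that $E^{\prime}_R$ is a valid placement.
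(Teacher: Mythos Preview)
Your argument via Hall's marriage theorem is correct, and the translation of the cycle--measurement machinery to the placement $\mc{M} := E^{\prime}_R$ is exactly the right move; the paper does the same translation. Where you diverge is in the final step: the paper does not pass through Hall's theorem but instead argues directly that the greedy procedure succeeds, invoking Corollary~\ref{cor-unique-sensor-in-each} at each stage $k$ to exhibit an edge in $c_k \cap E^{\prime}_R$ lying outside $\bigcup_{j<k}(c_j \cap E^{\prime}_R)$, hence outside $\{e^{\prime}_1,\ldots,e^{\prime}_{k-1}\}$.

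This bears on the one inaccuracy in your write-up. You assert that the greedy left-to-right pass ``need not succeed on its own, since a left-to-right pass can exhaust a shared representative prematurely.'' That is true for a generic system of distinct representatives satisfying only Hall's condition, but the property furnished by Corollary~\ref{cor-unique-sensor-in-each} is strictly stronger: every cycle owns a private element absent from the union of \emph{all} the others, not merely enough elements collectively. Under this stronger property the greedy assignment cannot get stuck, regardless of which available edge is chosen at each step. So your Hall argument is sound but the accompanying justification for why it is needed over the greedy argument is not; the paper's shorter route already suffices. What your approach buys is a cleaner separation between the combinatorial condition (Remark~\ref{cor-subset-unique-sensor-count}) and the existence conclusion, at the cost of importing an external theorem where a two-line direct argument works.
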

\begin{proof}
The set of edges $c \cap E_R$ are equivalent to the mapping $\mc{K}(c)$.
Therefore, Corollary \ref{cor-unique-sensor-in-each}, holds for each incremental cycle to be processed: that is, every additional $c \cap E_R$ set will have a one new edge that can be assigned to $e^{\prime}_k$.
\end{proof}

Therefore, between any two spanning trees there is a well defined set of single edge exchanges performed on the cycles of $\mc{FC}(T)$ which encode any arbitrary $\mc{T}\rightarrow \mc{T}^{\prime}$.
  

\begin{figure}[h]
\centering
\subfigure[][]{ 
	\includegraphics[scale=0.4]{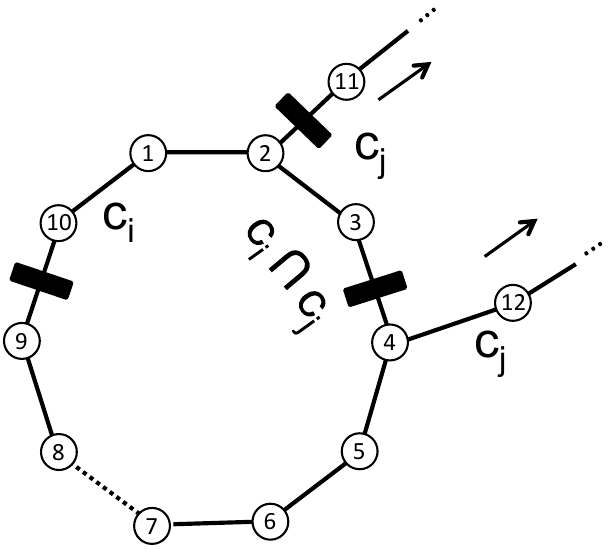}
	\label{fig:2cycle_crop}
}
\caption[Illustrative 2 cycle example.]{
	\ref{fig:2cycle_crop}
	Illustrative example to see properties in Lemma \ref{lem:decouple-on-fc}.
	}
\label{fig:fc_edge_exchange}
\end{figure} 

\subsubsection{ Decoupling of Measurement along Cycle}
We show that when an edge exchange occurs on a cycle, we need to only consider changes of flow values in $\mc{K}(c)$.
In the development $\mc{K}(c)$ and edge exchanges are focused on cycles in $\mc{FC}$.

\begin{prop}
\label{prop:exchange-path-rearrange}
A single edge exchange on $c$, with vertices $\{v_0, \hdots, v_m\}$, results in a permutation of an uninterrupted path of the vertices.
Therefore, if $e \neq e^{\prime}$, $p(c, e) \neq p(c, e^{\prime})$.
\end{prop}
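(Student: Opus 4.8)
The plan is to prove the proposition as a direct consequence of the structure of a cycle together with the definition of the edge-exchange operation, with no appeal to the flow equations. First I would recall that, by the definition of a \emph{cycle}, $c$ is a connected subgraph in which every vertex has degree exactly $2$. Writing its vertices in cyclic order as $v_0, v_1, \hdots, v_{m-1}$ with edges $(v_i, v_{i+1})$ (indices taken modulo $m$), the currently removed edge of $c$ in the spanning tree is one of these edges, say $e = (v_a, v_{a+1})$, and the single edge exchange replaces it by some other edge $e' \in c$.

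The first step is to show that deleting $e$ from $c$ yields an uninterrupted path. Removing $e$ lowers the degrees of $v_a$ and $v_{a+1}$ to $1$ and leaves every other vertex at degree $2$. Since a cycle is $2$-edge-connected, the subgraph $c \setminus \{e\}$ remains connected, and a connected graph on $m$ vertices with exactly two degree-$1$ vertices and the rest of degree $2$ is precisely a path. Hence $c \setminus \{e\}$ is the Hamiltonian path $v_{a+1}, v_{a+2}, \hdots, v_a$ on the vertex set of $c$, which is the asserted permutation: cutting the cyclic order open at $e$ produces a linear ordering of all the cycle's vertices. I would then identify $p(c, e)$ with this ordered vertex sequence, equivalently with the edge set $c \setminus \{e\}$.

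The second step establishes the injectivity claim. For $e \neq e'$ the two deletions give $p(c, e) = c \setminus \{e\}$ and $p(c, e') = c \setminus \{e'\}$ as edge sets; the former contains $e'$ and omits $e$, while the latter contains $e$ and omits $e'$, so the two edge sets differ and the paths are distinct. Equivalently, the endpoints of $p(c, e)$ are the two ends of $e$ while those of $p(c, e')$ are the ends of $e'$, so since $e \neq e'$ the two vertex permutations cannot coincide. Connecting this back to the edge exchange, before the exchange the tree traverses $c$ along $p(c, e)$ and afterwards along $p(c, e')$, and these traversals are genuinely different.

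I do not expect a serious obstacle, since the content is elementary graph theory; the one point requiring care is bookkeeping. I would make precise that $p(c, e)$ denotes the ordered vertex sequence, the \emph{permutation}, obtained by opening the cycle at $e$, so that distinctness is asserted at the level of vertex orderings rather than merely as unordered edge sets. This is the delicate part because it is the ordering of $p(c, e)$, not just the underlying edge set, that governs which vertices become downstream and hence the direction in which flow is redistributed along $c$; stating the injectivity in terms of the permutation is precisely what the subsequent decoupling argument in Lemma \ref{lem:decouple-on-fc} requires.
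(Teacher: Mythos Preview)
Your argument is correct. The paper, however, states Proposition~\ref{prop:exchange-path-rearrange} without proof: it is asserted as an evident structural fact about cycles and then immediately invoked in the proof of Lemma~\ref{lem:decouple-on-fc}. So there is no ``paper's own proof'' to compare against; you have simply supplied the elementary justification the authors left implicit. Your two steps---that deleting a single edge from a cycle yields a Hamiltonian path on the cycle's vertices, and that distinct deleted edges yield distinct such paths---are exactly the content the proposition is meant to capture, and your closing remark about why the \emph{ordering} (not merely the edge set) matters for the downstream use in Lemma~\ref{lem:decouple-on-fc} is on point.
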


\begin{lem}
\label{lem:decouple-on-fc}
Consider a single edge exchange on $c$, $e \rightarrow e^{\prime}$, where the following holds:
\begin{itemize}
\item [P1] $\forall s \not\in \mc{K}(c)$, $\Delta \mb{s} = 0$, 
\item [P2] $\forall s \in \mc{K}(c)$, $\Delta \mb{s} \neq 0$.      
\end{itemize}
\end{lem}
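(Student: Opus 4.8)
The plan is to pass from the combinatorial edge exchange to the algebraic flow equation \eqref{eq:network-flow-formulation} and to exploit the fact that the two trees carry the \emph{same} nodal injections $\mb{y}$. Write $\mb{f}$ for the flow of $\mc{T}$ and $\mb{f}^{\prime}$ for the flow of the tree $\mc{T}^{\prime} = \mc{T} + e - e^{\prime}$ produced by the single exchange on $c$. Both solve $B\mb{f} = \mb{y}$ with the same right-hand side, so subtracting gives $B(\mb{f}^{\prime} - \mb{f}) = 0$ and the flow difference lies in $\ker B$, i.e.\ the cycle space $\mc{C}(G)$. I would then localize its support: $\mb{f}$ is supported on $E(\mc{T})$ and $\mb{f}^{\prime}$ on $E(\mc{T}^{\prime}) \subseteq E(\mc{T}) \cup \{e\}$, so $\mb{f}^{\prime} - \mb{f}$ is a circulation supported on a spanning tree plus the single edge $e$. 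The only such circulations are scalar multiples of the fundamental cycle of $e$, which is exactly $c$; hence $\mb{f}^{\prime} - \mb{f} = \delta\,\mb{z}$ for a scalar $\delta$, where $\mb{z}$ is the signed incidence vector of $c$ (entries $\pm 1$ on the edges of $c$, zero elsewhere).

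Property P1 then falls out immediately: off $c$ the vector $\mb{z}$ vanishes, so $\Delta f = 0$ there, and in particular every sensor $s \notin \mc{K}(c)$ reads $\Delta s = 0$. For P2 I must pin down $\delta$ and show it is nonzero. Reading off the entry at the removed edge gives $\Delta f_{e^{\prime}} = f^{\prime}_{e^{\prime}} - f_{e^{\prime}} = -\,f_{e^{\prime}}$, since $e^{\prime}$ is a co-tree edge of $\mc{T}^{\prime}$ and carries zero flow; as $\mb{z}_{e^{\prime}} = \pm 1$ this forces $\delta = \mp f_{e^{\prime}}$. The flow $f_{e^{\prime}}$ on the tree edge $e^{\prime}$ equals the aggregate downstream load cut off by $e^{\prime}$, which is strictly nonzero under the model's assumption of nonzero nodal consumption. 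Therefore $\delta \neq 0$, so every edge on $c$ satisfies $\Delta f = \pm\delta \neq 0$ and every sensor $s \in \mc{K}(c)$ reads $\Delta s \neq 0$, giving P2.

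The combinatorial content of Proposition \ref{prop:exchange-path-rearrange} is what makes the support argument nondegenerate: the exchange permutes an uninterrupted path along $c$, so $e \neq e^{\prime}$ genuinely relocates the open edge of the cycle and the rerouted quantity $\delta$ is a single well-defined downstream partial sum rather than an accidental cancellation. I would invoke it to justify that $f_{e^{\prime}}$ is a full downstream load aggregate along the rearranged path, so that identifying $\delta$ with a flow value is unambiguous.

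The step I expect to be the main obstacle is establishing $\delta \neq 0$, not the localization. The identity $\mb{f}^{\prime} - \mb{f} = \delta\,\mb{z}$ is a routine consequence of $\ker B = \mc{C}(G)$ and the uniqueness of the fundamental cycle of $e$. The nonvanishing of $\delta$ is where the load model enters: I must rule out the degenerate case in which the subtree severed by $e^{\prime}$ has zero net load (for instance a component containing only zero-consumption feeder or root nodes). Under the standing assumption that every load node has nonzero consumption this cannot occur, and it is the only place the statement can fail, so I would state that assumption explicitly at the point where $f_{e^{\prime}} \neq 0$ is used.
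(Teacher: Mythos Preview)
Your proof is correct and takes a genuinely different route from the paper. The paper argues purely combinatorially via Proposition~\ref{prop:exchange-path-rearrange}: an edge exchange on $c$ permutes an uninterrupted path of vertices around the cycle, so any sensor off $c$ still sees the same connected block of vertices downstream (P1), while any sensor on $c$ sees a strictly different downstream set and hence a different signed flow (P2). Your argument is algebraic: you subtract the two tree flows, observe the difference lies in $\ker B=\mc{C}(G)$, localize its support to $E(\mc{T})\cup\{e\}$, and conclude it is a scalar multiple $\delta\,\mb{z}$ of the fundamental-cycle vector, with $\delta=\mp f_{e'}$ read off at the removed edge.

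What your approach buys is an explicit, quantitative statement: every sensor on $c$ changes by exactly $\pm\delta$, and $\delta$ is identified with a concrete tree-edge flow. This makes the nondegeneracy condition for P2 transparent (the subtree severed by $e'$ must carry nonzero net load) and avoids relying on Proposition~\ref{prop:exchange-path-rearrange}, which the paper states without proof. Conversely, the paper's path-permutation picture gives more direct intuition for why \emph{unsigned} flow measurements would break P2 (the example $s=2\to s'=-2$ and Remark~\ref{remark-single-independent-sensor}), something your circulation argument captures only implicitly through the sign $\pm\delta$. One minor note: your final paragraph invoking Proposition~\ref{prop:exchange-path-rearrange} to justify that $f_{e'}$ is a ``full downstream load aggregate'' is unnecessary for your argument --- that fact is a direct consequence of the tree flow equation $B_{\mb{w}}\mb{f}=\mb{y}$ and does not require the path-rearrangement description at all.
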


\begin{proof}
(1) From Proposition \ref{prop:exchange-path-rearrange}, any sensor that measures a single vertex in $p$ will measure all the vertices in $p$ before and after the edge exchange.
(2) From Proposition \ref{prop:exchange-path-rearrange}, since we rearrange all the nodes yet keep a fixed edge to measure flows, all sensors in $\mc{K}(c)$ will change values.
\end{proof}

\begin{RK}
\label{remark-single-independent-sensor}
In condition (2) of Lemma \ref{lem:decouple-on-fc}, we have $\Delta \mb{s} \neq 0$ since we measure both magnitude and direction of flow.
\end{RK}

\subsubsection{Inductive Proof of Theorem \ref{thm:spanning_tree_identifiability}}
\label{subsubsection-inductive-proof}
  
Note our effort is to show sufficiency, for necessity we only need to consider the cycle $c$ in Figure \ref{fig:2cycle_crop}.
If no sensor exists on $c_i$, $E \setminus \mc{M}$ is not a spanning tree and every edge exchange leads to $\Delta \mb{s} = 0$.

From Lemma \ref{lem-edge-exchange-existence}, we can encode any $\mc{T} \rightarrow \mc{T}^{\prime}$ transition as a set of cycle-based edge exchanges.
\textit{We now show inductively that no edge exchange exists, which leads to $\Delta \mb{s} = 0$.}
So $\mc{T} \rightarrow \mc{T}^{\prime}$ always leads to $\Delta \mb{s} \neq 0$.

We show that given any number of non-trivial edge exchanges, there is at least one $s_i$ s.t. $\Delta \mb{s}_i \neq 0$. 
\begin{itemize}
\item [] \textbf{Base Case}
	Assume $c \subset \mc{FC}$ contain a non-trivial edge exchange.
        From Lemma \ref{lem:decouple-on-fc} (P2), $\forall s \in \mc{K}$, $\Delta \mb{s} \neq 0$.
\item [] \textbf{Inductive Hypothesis} 
	Assume multiple cycles $C \subset \mc{FC}$, contain a non-trivial edge exchange, where $|C|$ = $N$.
	Assume there exists at least one $s_i \in \mc{K}_{C}(c)$, where $\Delta \mb{s}_i \neq 0$.
\item [] \textbf{Inductive Step}
	Suppose we find a cycle $c_{n+1} \in \mc{FC} \setminus C$ where an edge exchange leads to $\Delta \mb{s}_i = 0$.
	From Corollary \ref{cycle-sensor-linear-independence}, subset, $C \cup c_{n+1}$ will have at least $N+1$ unique sensors and the new cycle must introduce some sensor $s_j \not\in \mc{K}_{C}$.
	From Lemma \ref{lem:decouple-on-fc} (P2) $\Delta \mb{s}_j \neq 0$.
\end{itemize}

\subsection{ Proof of Theorem \ref{thm:undirected-flow-placement} }

This can first be shown in the following example.
In Figure \ref{fig:2cycle_crop}, where the cycle in consideration $c_i$ will have a single edge exchange: for example $\Delta e = (8,7) \rightarrow (1, 2)$ where all $x_i = 1$.
In this case, $s = 2$ and $s^{\prime} = -2$.
Since we assume that the magnitude and direction of the flow is measured, therefore there is no ambiguity.
Any other edge will again lead to a new measured value.
Since nodes $\{v_0, \hdots, v_{10}\}$ are always connected, all sensors outside of $\mc{K}(c)$, for example a measurement on edge $e = (2, 10)$ will not detect an edge exchange.

\begin{proof}
If we only consider single edge exchange, ambiguity occurs if a cycle has a single measurement as in the example.
If $|p_i| = 2$, no ambitious single edge exchange can occur.
If $|p_i| > 2$, a single edge exchange will have no ambiguity if there are at least $2$ measurements on any cycles where a single exchange occurs.
\end{proof}
    
\subsection{ Proof of Lemma \ref{lem-BN-rank-N-1} }
    
\begin{proof}
The matrix $B \in \{-1, 0, +1\}^{N \times M}$ and a valid placement of size $\mu = M-N+1$, therefore, $B_N \in \{-1, 0, +1\}^{N \times N-1}$.
Since the edges of $G \setminus \mc{M}$ maintain a spanning tree property, the graph has 1 connected component. 
Therefore the incidence matrix must be of rank $N-1$ \cite{GraphsAndMatrices2010}.
Since matrix $B^{r}_N$ is a square matrix of size $N-1$ and rank $N-1$ it is invertible.  
\end{proof}

\subsection{Proof of Theorem \ref{thm-f-star-correct-spanning-tree}}
\label{subsection-proof-of-Theorem-2}

\begin{proof}
We need to show that  \eqref{f-star-solution} encodes the correct spanning tree: so $\mb{f}^{\star} = \mb{f}(\mb{x}, {\mc{T}_1})$.
This can be done by contradiction: 
Assume that the solution to $\mb{f}^{\star} = \mb{f}(\mb{x}, {\mc{T}_2})$ represents some other spanning tree or even connected graph with some flow.
This implies that
\begin{align}
\begin{bmatrix} 
(B^{r,-1}_N) (\mb{x} - B^{r}_{M} \mb{s}({\mc{T}_2}, \mb{x}) )  \\ \mb{s}({\mc{T}_2}, \mb{x})
\end{bmatrix}  
= 
\begin{bmatrix} 
(B^{r,-1}_N) (\mb{x} - B^{r}_{M} \mb{s}({\mc{T}_1}, \mb{x}) )   \\ \mb{s}({\mc{T}_2}, \mb{x})
\end{bmatrix}.  
\end{align}
\color{black}
Since $(B^{r,-1}_N)$ exists, this implies that both $\mb{s}({\mc{T}_2}, \mb{x}) = \mb{s}({\mc{T}_1}, \mb{x})$ and $B^{r}_{M} \mb{s}({\mc{T}_2}, \mb{x}) = B^{r}_{M}  \mb{s}({\mc{T}_1}, \mb{x})$.
Since the dimension of the null space of $B^{r}_{M}$ is $0$, this reduces to to $\mb{s}({\mc{T}_1}, \mb{x}) = \mb{s}({\mc{T}_2}, \mb{x})$.
However from Theorem \ref{thm:spanning_tree_identifiability}, this is a contradiction.

Finally, we must show that no subgraph of $G^{\prime}$ is uniquely distinguished from $\mc{T}_1$.
Such a subgraph occurs for any spanning tree, where a removed edge of the co-tree is added to $\mc{T}$ (i.e. $G^{\prime} = \mc{T}+e$, for $e \in G \setminus \mc{T}$).

We show that $\mb{s}({\mc{T}_1}, \mb{x}) \neq \mb{s}(G^{\prime}, \mb{x})$ repeating the a proof similar to Theorem \ref{thm:spanning_tree_identifiability} as follows:
\begin{enumerate}
\item Repeat lemma \ref{lem:decouple-on-fc} for not only edge exchanges, but the case where the cycle has no removed edge.
We can distinguish any added edge vs. spanning tree on the cycle by only measurements on that cycle.
All measurements not on the cycle will not see any change in topology. 
Therefore (P1) and (P2) of lemma \ref{lem:decouple-on-fc} still hold.
\item Given the single cycle property, for distinguishing spanning trees and any $\mc{T}+e$, for a single edge addition, we can repeat the same proof for Theorem \ref{thm:spanning_tree_identifiability} for arbitrary edge additions thereby showing that $\mb{s}({\mc{T}_1}, \mb{x}) \neq \mb{s}(G^{\prime}, \mb{x})$.
\end{enumerate}
\end{proof} 


\subsection{Proof of Theorem \ref{thm-comb-flow-rewrite} }
\label{subsection-proof-of-theorem-flow}

Recall the combinatorial ML detector is the following:
\begin{align}
\mc{T} =  \underset{ \mc{T} \in \mbb{T}_{\tau} }{\arg\max} \left(  \mb{s}_{obs} - \mb{s}(\mb{x}, \mc{T} )  \right)^{T}  \Sigma^{-1}_{s, i}  \left(  \mb{s}_{obs} - \mb{s}(\mb{x}, \mc{T} )  \right). \label{eq:appendix-comb-ML}
\end{align}

Observation vector $\mb{s}_{obs} = \{ \mb{s}_{obs, +}, \mb{s}_{obs, z} \}$ leads to a reduction of the initial search space from $\mbb{T}$ to $\mbb{T}_{+}$ by removing any tree which will violate the zero/non-zero flow observations. 
Additionally, we must remove the zero observations in the likelihood function and reduce the covariance matrix.

Recall that we can construct the observations $\mb{s} = A_{\mc{M}} \mb{f}$, therefore we can construct a reduced $\mb{s}_{+} = A_{\mc{M}, +} \mb{f}$, by removing the rows associated with the zero observations.
We can similarly remove the specific rows and columns of the covariance matrix by a matrix $E$.
The matrix $E$ is constructed by removing the columns of the identity matrix corresponding to the index of the zero observations.
Therefore $\Sigma_{+} = E \Sigma E^{T}$, resulting in the true covariance matrix:
\begin{align*}
\Sigma_{s, \mc{T}, +} = \sigma^2 E  A_{\mc{M}} B^{r, -1}_{\mc{T}} B^{r,-1, T}_{\mc{T}} A^T_{\mc{M}} E^{T}.
\end{align*}
This will guarantee that $\Sigma^{-1}_{s, \mc{T}, +}$ always exists.
We can now re-arrange the combinatorial optimization over $\mc{T}$ and $\mb{s}(\mb{\hat{x}}, \mc{T})$ in terms of a power flow vector $\mb{f}$ leading to our desired reduction.

Starting from \eqref{eq:appendix-comb-ML}, we have the following (shown on following page):

\begin{figure*}
\begin{align}
\hline
\mc{\hat{T}}  &= \underset{\mc{T} \in \mbb{T}^{+}}{\arg\min} \frac{1}{2} \left( \mb{s}_{obs, +} - \mb{s}_{+}(\mb{\hat{x}}, \mc{T}) \right)^{T} \Sigma^{-1}_{s, \mc{T}, +}  
		     \left( \mb{s}_{obs, +} - \mb{s}_{+}(\mb{\hat{x}}, \mc{T}) \right)   	
		     - \frac{1}{2} \ln \left( \det( \Sigma^{-1}_{s, \mc{T}, +} )\right)  \label{s-to-f-eq1}  	\\
	            &= \underset{\mc{T} \in \mbb{T}^{+}, A_{M, +} \mb{f} = \mb{s}_{obs, +}  }{\arg\min} 
	                \frac{1}{2} \left( A_{M, +} \mb{f} - A_{M, +}\mb{f}(\mb{\hat{x}}, \mc{T}) \right)^{T} \Sigma^{-1}_{s, \mc{T}, +} \left( A_{M, +}  \mb{f} - A_{M, +}\mb{f}(\mb{\hat{x}}, \mc{T}) \right)  
	                  - \frac{1}{2} \ln \left( \Sigma^{-1}_{s, \mc{T}, +} )\right)  \label{s-to-f-eq2}  \\	    
	      &= \underset{\mc{T} \in \mbb{T}^{+},~ \mb{f}: A_{M, +} \mb{f} = \mb{s}_{obs, +} }{\arg\min} \frac{1}{2} \left(   \mb{f} -\mb{f}(\mb{\hat{x}}, \mc{T}) \right)^{T} A^T_{M, +} 
	            \Sigma^{-1}_{s, \mc{T}, +}  A_{M, +}  \left( \mb{f} -\mb{f}(\mb{\hat{x}}, \mc{T}) \right) 
	            - \frac{1}{2} \ln \left( \det( \Sigma^{-1}_{s, \mc{T}, +} )\right)    \label{s-to-f-eq3}  \\	    
	     &= \underset{\mc{T} \in \mbb{T}^{+},~ \mb{f}: A_{M, +} \mb{f} = \mb{s}_{obs, +} }{\arg\min}  \frac{1}{2} \left( \mb{f} - \mb{f}(\mb{\hat{x}}, \mc{T}) \right)^{T} B^{r, T}_{\mc{T}} B^{r, T, -1}_{\mc{T}} 
	           A^T_{M, +} \Sigma^{-1}_{s, \mc{T}, +}  A_{M, +}  B^{r}_{\mc{T}} B^{r, -1}_{\mc{T}}  \left( \mb{f} -\mb{f}(\mb{\hat{x}}, \mc{T}) \right)  
	           - \frac{1}{2} \ln \left( \det( \Sigma^{-1}_{s, \mc{T}, +} )\right)  \label{s-to-f-eq4}  \\	    	    	    	    	    
	    &= \underset{\mc{T} \in \mbb{T}^{+},~ \mb{f}: A_{M, +} \mb{f} = \mb{s}_{obs, +} }{\arg\min} \frac{1}{2} \left(  B^{r}_{\mc{T}} \mb{f} - B^{r}_{\mc{T}} \mb{f}(\mb{\hat{x}}, \mc{T}) \right)^{T} B^{r, T, -1}_{\mc{T}} A^T_{M, +} \Sigma^{-1}_{s, \mc{T}, +}  A_{M, +} B^{r, T}_{\mc{T}}  \left(B^{r, -1}_{\mc{T}} \mb{f} - B^{r, -1}_{\mc{T}} \mb{f}(\mb{\hat{x}}, \mc{T}) \right)  \nonumber \\
	    & \hspace{30mm} - \frac{1}{2} \ln \left( \det( \Sigma^{-1}_{s, \mc{T}, +} )\right)           \label{s-to-f-eq5}   \\	    	    	    	    	    	    
	    &= \underset{\mc{T} \in \mbb{T}^{+},~ \mb{f}: A_{M, +} \mb{f} = \mb{s}_{obs, +} }{\arg\min} \frac{1}{2} \left(  B^{r}_{\mc{T}} \mb{f} - \mb{\hat{x}}  \right)^{T} B^{r, T, -1}_{\mc{T}} A^T_{M, +} \Sigma^{-1}_{s, \mc{T}, +}  A_{M, +} B^{r, -1}_{\mc{T}} \left(B^{r}_{\mc{T}} \mb{f} - \mb{\hat{x}} \right) 
	         - \frac{1}{2} \ln \left( \det( \Sigma^{-1}_{s, \mc{T}, +} )\right).  \label{s-to-f-eq6}     	\\    	    	    
	     \hline \nonumber
\end{align}
\end{figure*}  

The following reductions are performed.
Eq. \eqref{s-to-f-eq1} restates the combinatorial detector in terms of the non-zero observations and invertible covariance matrix.
Eq. \eqref{s-to-f-eq2} replaces the observed flow $\mb{s}_{obs, +}$, with an unknown flow to be determined $\mb{f}$, under the constraint that $A_{\mc{M}, +} \mb{f} = \mb{s}_{obs, +}$.
The second term in the quadratic form, $\mb{s}_{+}(\mb{\hat{x}}, \mc{T})$ which is the predicted observation under a hypothesis, is replaced with $A_{\mc{M}, +} \mb{f}(\mb{x}, \mc{T}^{\star})$.
Eq. \eqref{s-to-f-eq3} - \eqref{s-to-f-eq3} rearrange terms and push a $B^{r, T, -1}_{\mc{T}}$ into the quadratic form.

We can re-write \eqref{s-to-f-eq6} in the form

\begin{align}
& \text{min}~ \frac{1}{2}(  B^{r}_{\mc{T}} \mb{f} - \mb{\hat{x}} ) \Sigma^{1}_{\mc{T}}(  B^{r}_{\mc{T}} \mb{f} - \mb{\hat{x}} )  - \frac{1}{2} \ln \left( \det( \Sigma^{2}_{\mc{T} } )\right) 	\label{apndx-opt3} \tag{OPT-3}  	\\
& \text{s.t.} 																																\nonumber 				\\
& ~~~~~ A_{M, +} \mb{f} = \mb{s}_{obs,+}      																										\label{apndx-opt3-flow-obs}  		\\
& ~~~~~ \mc{T} \in \mbb{T}_{+}.        																												\label{apndx-opt3-sp-tree}
\end{align}

We can consider this form to be a stochastic equivalent to the deterministic MILP in \eqref{opt1}, with the following matrices:
\begin{align*}
\Sigma^{2}_{ \mc{T} } &= \sigma^2 A_{\mc{M},+} B^{r, -1}_{\mc{T}} B^{r,-1, T}_{\mc{T}} A^T_{\mc{M},+}, \\
\Sigma^{1}_{ \mc{T} } &=  B^{r, T, -1}_{\mc{T}} A^T_{\mc{M}, +}  \left( \Sigma^{2}_{ \mc{T} }  \right)^{-1} A_{\mc{M}, +} B^{r, T}_{\mc{T}}. \\
\end{align*}

\subsection{Proof of Theorem \ref{thm:MST-OPT-BOUNDS} }
\label{subsection-MST-upper-bound}

Starting from \eqref{OPT_F_L4}, we have the following:

\begin{align}
\text{OPT}(\mb{f}_{obs}) &= \underset{ \mc{T} \in \mbb{T}^{+} } {\min} \frac{1}{2} \|  B_{G \setminus \mc{T} } \mb{f}_{obs}  \|^{2}   						\label{THRM_4_APP_L1}  \\
				     &= \underset{ \mc{T} \in \mbb{T}^{+} } {\min} \frac{1}{2} \| \sum_{i: e_i \in G \setminus \mc{T}} b_{i} \mb{f}_{obs}(i) \|^2  				\label{THRM_4_APP_L2}  \\ 
				     &\leq \underset{ \mc{T} \in \mbb{T}^{+} } {\min} \frac{1}{2} \sum_{i: e_i \in G \setminus \mc{T}}  \|  b_{i} \mb{f}_{obs}(i) \|^2  			\label{THRM_4_APP_L3}  \\
				     &= \underset{ \mc{T} \in \mbb{T}^{+} } {\min} \sum_{i: e_i \in G \setminus \mc{T}} \| \mb{f}_{obs}(i) \|^2.							\label{THRM_4_APP_L4}  \\
				     &= \sum_{i: e_i \in G } \| \mb{f}_{obs}(i) \|^2 -  \underset{ \mc{T} \in \mbb{T}^{+} } {\min} \sum_{i: e_i \in \mc{T}} \| \mb{f}_{obs}(i) \|^2	\label{THRM_4_APP_L5}  \\
				     &= \|\mb{f}_{obs}\|^2 - \text{MST}(- |\mb{f}_{obs}|^2).																\label{THRM_4_APP_L6} 				     
\end{align}
  
Eq. \eqref{THRM_4_APP_L2} represents the partitioned incidence matrix as a sum of column vectors. 
Inequality in \eqref{THRM_4_APP_L3}, arises from the triangle inequality.
Since, each column vector of the incidence matrix is a $+1$, $-1$ pair, this reduces to \eqref{THRM_4_APP_L4}.
Minimizing the sum of squares of each co-tree weights in \eqref{THRM_4_APP_L4} is equivalent to MST$(- |\mb{f}_{obs}|^2)$, which is equivalent to MST$(- |\mb{f}_{obs}|)$ since the greedy edge addition step in solving a minimum spanning tree problem will take the same action regardless if the edge weights are squared or not.

%
%
\subsection{Hypothesis Testing Interpretation of Flow Based Approximate ML}
\label{subsection-alternative-view-of-FMST}

In Section \ref{subsection-flow-based-APX-MAP-detector}, the approximate ML detector was formulated as an MIQP.
An alternative interpretation of this optimization is that of a hypothesis test of the noisy flows being actually of value zero.

The flow solution in  \eqref{network-flow-constraint-matrix} can be used to construct an efficient hypothesis detector which has polynomial run-time.
In the stochastic case $\mb{x}$ is not known, but $\mb{\hat{x}}$ is given, therefore $\mb{f}(\mb{\hat{x}}, \mb{s} )$ can be used.

First, we can determine the distribution of this \textit{noisy-flow} vector conditioning on a candidate hypothesis $\mc{T}_i$:  
\begin{align} 
\mb{f}(\hat{\mb{x}}, \mb{s}_{obs} ) &= B^{r, -1}_N (\hat{\mb{x}} - B^{r}_{M} \mb{s}_{obs} )  										\label{eq:noisy-flow-line1} \\ 
					  	    &= B^{r, -1}_N (\hat{\mb{x}} - B^{r}_{M} ( \mb{s}(\mc{T}_i, \mb{\hat{x}}) + \epsilon_{s, i}) ) 			\label{eq:noisy-flow-line2} \\
						    &= \mb{f}(\mb{\hat{x}},  \mb{s}(\mc{T}_i, \mb{\hat{x}}) ) + B^{r, -1}_{N} B^{r}_{M} \epsilon_{s, i} 	  	 \label{eq:noisy-flow-line3} \\
						    & \sim N ( \mb{f}(\mb{\hat{x}}, \mb{s}(\mc{T}_i, \mb{\hat{x}}) ),  \Sigma_{f, i} ): \mc{T}_i \text{ is true}   	  \label{eq:noisy-flow-line4}
\end{align}

The LHS of  \eqref{eq:noisy-flow-line1} is the distribution of the \textit{noisy-flow} conditioning on a particular hypothesis $\mc{T}_i$.
The RHS evaluates the flow network solution \eqref{f-star-solution} using the forecasted consumption $\mb{\hat{x}}$ instead of the true value $\mb{x}$.
This relies on  \eqref{eq:stochastic-flow-obs-eq2}-\eqref{eq:stochastic-flow-obs-eq4}.

The vector $\mb{f}(\mb{\hat{x}}, \mb{s}(\mc{T}_i, \mb{\hat{x}}) )$ is the flow from spanning tree $\mc{T}_i$ and nodal injections $\mb{\hat{x}}$.
The true `noisy-flow' is distributed around this value.
   
A possible hypothesis test is the following: 
\begin{align}   
\hat{\mc{T}} = \underset{\forall \mc{T}_i \in \mbb{T} }{\arg\max}~\Pr( \mb{f}(\hat{\mb{x}}, \mb{s}_{obs} )~|~\mb{f}(\mb{\hat{x}}, \mb{s}(\mc{T}_i, \mb{\hat{x}}) ) ) \label{eq-zero-flow-test-full}.
\end{align}

This is no better than \eqref{eq-combinatorial-MAP-detector} for the following reasons:
\begin{enumerate}
\item We must compute the hypothesis mean $\mb{f}(\mb{\hat{x}}, \mb{s}(\mc{T}_i, \mb{\hat{x}}) )$ under every spanning tree $\mc{T}_i$, it is still of $O(|\mbb{T}|)$ complexity.
\item The covariance matrix is of rank $\mu$ (the rank of $\Gamma_i$) and not $E$ (the size of $\mb{f}$).    
         Therefore $\Sigma^{-1}_{f,i}$ is not positive definite and therefore the inverse $\Sigma^{-1}_{f, i}$ cannot be computed. 
\end{enumerate}   
   
We can alternatively test $\mu$ elements of the \textit{noisy-flow} $\mb{f}(\hat{\mb{x}}, \mb{s} )$ under the hypothesis that their true value is zero corresponding to the zero's of the hypothesized spanning tree.

Using the following shorthand:  $\mb{f}_{obs} = \mb{f}(\hat{\mb{x}}, \mb{s}_{obs} )$ and $\mb{f}_{\mc{T}} = \mb{f}(\hat{\mb{x}}, \mb{s}(\mc{T}, \mb{\hat{x}}) )$ we can represent the variables in \eqref{eq-zero-flow-test-full}.
Consider the set of indices $i \in \mc{I}_{\mc{T}}$ where $\mc{I}_{\mc{T}} = \{i: e_i \in E \setminus \mc{T} \}$.
This can be used to index into the vector $\mb{f}_{obs}$ and $\mb{f}_{\mc{T}}$.
Therefore, under a hypothesis $\mc{T}$, we can calculate the following likelihood : 
\begin{align}
\hat{\mc{T}} =  \underset{\mc{I}_{\mc{T}}: \mc{T} \in \mbb{T} }{\arg\max} \Pr( \mb{f}_{obs}(i_1), \hdots, \mb{f}_{obs}(i_\mu) | \mb{f}_{\mc{T}}(i_1) = 0, \hdots, \mb{f}_{\mc{T}}(i_\mu) = 0 ).  \label{eq-zero-flow-detector}
\end{align}

In this case, the reduced covariance matrix is potentially invertible.
The following theorem relates the combinatorial test to a test of zero flows on the empirical flow.
\begin{thm}
\label{thm-zero-flow}
The zero flow hypothesis detection in \eqref{eq-zero-flow-detector} and the combinatorial flow hypothesis test in  \eqref{eq-combinatorial-MAP-detector} are equivalent for $\mc{M}$ satisfying the placement condition in Theorem \ref{thm:spanning_tree_identifiability} of size $\mu$.
\end{thm}

\begin{proof}
We construct this test statistic 
\begin{align}
B_H ( \mb{f}_{obs} - \mb{f}_{\mc{T}} ) = [ \mb{f}_{obs}(i_{1}) , \hdots,  \mb{f}_{o}(i_{\mu})  ]^{T},
\end{align}
with matrix $B_H \in \{0, 1 \}^{|\mu| \times |E|}$ where $B_H(k, i_k) = 1$ for each edge in the co-tree of the particular spanning tree under hypothesis.

The test statistic can be reduced to the following:
\begin{align}
B_H ( \mb{f}_{obs} - \mb{f}_{\mc{T}} ) &= B_H \left( (B^{r,-1}_N) \hat{\mb{x}}  - (B^{r,-1}_N) B^r_M \mb{s}  \right)  \\ 
	       					      	  &- B_H \left( (B^{r,-1}_N) \hat{\mb{x}}  - (B^{r,-1}_N) B^r_M \mb{s}(\mc{T}, \mb{\hat{x}})  \right) \\
	       						  &= \underbrace{ B_H  (B^{r,-1}_N) B^r_M }_{H} \left( \mb{s}  - \mb{s}(\mc{T}, \mb{\hat{x}} ) \right) 
\end{align}

Therefore, the new hypothesis test using $x(\mc{T})$, is equivalent to the previous test using observed flow with some linear transformation $H$.
The matrix is $H$ is full rank in the $|\mc{M}| = \mu$ case. 
Clearly if H is not square when $|\mc{M}| > \mu$, $H$ will not be full rank and the two tests will no longer be the same.
\end{proof}

Therefore the ML detector over the set of all spanning trees reduces to evaluating the probability that a subset of these vectors are should actually be zero.
Unfortunately we still need to enumerate $|\mbb{T}|$ hypotheses.
Fortunately, however, this flow vector form allows us to very efficiently prune out all but a few alternative hypothesis to test.
Intuitively, $\mb{f}(\hat{\mb{x}}, \mb{s}_{obs} )$ will have some very few small values which actually encode potential spanning trees, and many very large values which can just be pruned.

Therefore, the approximate solution based on noisy flow minimum spanning tree detector is approximating the hypothesis testing procedure of determining which edges have zero underlying flow.

\bibliographystyle{IEEEtran}
\bibliography{tps_bib}

\end{document}